\newcommand\ian[1]{\textcolor{blue}{ID:#1}}
\newcommand{\IndState}{\State\hspace{\algorithmicindent}}
\newcommand{\IndDubState}{\State\hspace{\algorithmicindent}\hspace{\algorithmicindent}}
\algnewcommand{\LeftComment}[1]{\Statex \(\triangleright\) #1}
\newcommand\abSemantics[1]{\llbracket #1 \rrbracket_{\alpha\beta}}
\newcommand\aSemantics[1]{\llbracket #1 \rrbracket_{\alpha}}
\newcommand\bSemantics[1]{\llbracket #1 \rrbracket_{\beta}}
\newcommand\abmodels{\vdash}
\newcommand\salpha{\hat{\alpha}} % symbolic alpha
\newcommand\aalpha{\hat{\alpha}} % in case i accidentally think ``abstract alpha'' instead of symbolic
\newcommand\calpha{\alpha} % concrete alpha
\newcommand\vbeta{\medmath{\beta}} % beta for state variables
\newcommand\TLA{TLA\textsuperscript{+}}
\newcommand\expr[1]{\epsilon\bigl( #1 \bigr)}
\newcommand\exprb[1]{\epsilon\left( #1 \right)}
\newcommand\tolts{\textsc{lts}}
\newcommand\errlts{\textsc{err}}
\newcommand\proplts{\textsc{prop}}
\newcommand\parsym{\mathbin{\!/\mkern-5mu/\!}}
\newcommand\parexp{\parallel}
\newcommand\fixpoint{\textsc{Fix}}
\newcommand\prop{\mathcal{P}}
\newcommand\totorder{\leqslant}
\newcommand\st{\ : \ }
\newcommand\tlastate[1]{
    \begin{array}{lll}
        #1
    \end{array}
}
\newcommand\sstack[1]{
    \begin{array}{c}
        #1
    \end{array}
}
\newtheorem{theorem}{Theorem}
\newtheorem{lemma}{Lemma}
\theoremstyle{definition}
\newtheorem{definition}{Definition}
\newtheorem{example}{Example}
\theoremstyle{remark}
\newenvironment{thmn}[1]
{\noindent\textbf{Theorem #1.} \begin{itshape}}
{\end{itshape}}
\newenvironment{lemn}[1]
{\noindent\textbf{Lemma #1.} \begin{itshape}}
{\end{itshape}}
\begin{document}

\title{Recomposition: A New Technique for Efficient Compositional Verification}

\makeatletter
\newcommand{\linebreakand}{%
  \end{@IEEEauthorhalign}
  \hfill\mbox{}\par
  \mbox{}\hfill\begin{@IEEEauthorhalign}
}
\makeatother

\author{
\IEEEauthorblockN{Ian Dardik}
\IEEEauthorblockA{\textit{Carnegie Mellon University} \\
Pittsburgh, PA, USA\\
idardik@andrew.cmu.edu}
\and
\IEEEauthorblockN{April Porter}
\IEEEauthorblockA{\textit{University of Maryland, College Park} \\
College Park, MD, USA \\
aporter3@terpmail.umd.edu}
\and
\IEEEauthorblockN{Eunsuk Kang}
\IEEEauthorblockA{\textit{Carnegie Mellon University} \\
Pittsburgh, PA, USA \\
eunsukk@andrew.cmu.edu}
}

\maketitle

\begin{abstract}
Compositional verification algorithms are well-studied in the context of model checking.
Properly selecting components for verification is important for efficiency, yet has received comparatively less attention.
In this paper, we address this gap with a novel compositional verification framework that focuses on component selection as an explicit, first-class concept.
The framework decomposes a system into components, which we then \textit{recompose} into new components for efficient verification.
At the heart of our technique is the \textit{recomposition map} that determines how recomposition is performed; the component selection problem thus reduces to finding a good recomposition map.
However, the space of possible recomposition maps can be large.
We therefore propose heuristics to find a small portfolio of recomposition maps, which we then run in parallel.
We have implemented our techniques in a model checker for the \TLA{} language.
In our experiments, we show that our tool achieves competitive performance with TLC--a well-known model checker for \TLA{}--on a benchmark suite of distributed protocols.
%\keywords{}
\end{abstract}

\graphicspath{{src/Figs/}}

\pagestyle{plain}
\section{Introduction}
Model checking is an important tool for software, protocol, and algorithm development.
\emph{Compositional verification} is a paradigm in which a system is decomposed into components, which are then verified using a divide-and-conquer algorithm.
To help model checking scale to large programs and specifications, compositional verification remains an important type of technique for combating the state explosion problem \cite{Giannakopoulou:2018handbook}.

Most research papers on compositional verification assume that the components are pre-determined and focus solely on verification algorithms \cite{alur:2005,Chen:2009,Cheung:1995,Cheung:1996,Cheung:1999,Cobleigh:2003,Graf:1990,Gupta:2007,Pasareanu:2008,Pasareanu:2006,Sabnani:1989,Tai:1993,Zheng:2010,Zheng:2014,Zheng:2015}.
However, \emph{component selection}--that is, determining the set of decomposed components and the order in which they are verified--can greatly impact performance, in terms of both run time and state space size.
Yet there are comparatively fewer model checking frameworks that investigate component selection, e.g. by automating decomposition \cite{Cobleigh:2006,Metzler:2008,Nam:2008}.
Unfortunately, research into automated decomposition has seen limited success thus far; as Cobleigh et al. lament, decomposing a system is tough \cite{Cobleigh:2006}.

In this paper, we propose a new safety verification approach for symbolic specifications that is centered around component selection.
In our approach, we begin by decomposing a system $S$ into components $C_1, \dots, C_n$.
Traditionally, a compositional verification algorithm is applied to these components to verify a system level property $P$, as shown in Fig.~\ref{fig:recomp-visual-1}.
However, verifying these components may be less efficient than verifying the entire (monolithic) system directly without compositional techniques.
%However, this verification problem may be less efficient than verifying the entire (monolithic) system directly without compositional techniques.
Our key insight that addresses this shortcoming is to \textit{recompose} the components into new components $D_1, \dots, D_m$ that we verify instead.
For example, Fig.~\ref{fig:recomp-visual-2} shows $D_1$ composed of $C_1$ and $C_3$ while $D_2$ is composed of $C_2$.

The choice of how to recompose is determined by a \textit{recomposition map} that maps $C_i$'s to $D_j$'s.
Recomposition maps make component selection an explicit, first-class concept and lie at the heart of our technique.
We will show that, in practice, there often exists a recomposition map that results in a compositional verification problem that is more efficient than verifying the monolithic specification directly.

Additionally, we will show that our method is conducive to \textit{specification reduction}.
Specification reduction techniques, e.g. program slicing \cite{Bruckner:2005,Weiser:1982}, are generally considered separately from compositional verification.
However, model checking with recomposition unites these two techniques under a single framework.
For example, Fig.~\ref{fig:recomp-visual-3} shows a situation in which a partial recomposition map is used to reduce a specification with four-components to just the first three.

%Ultimately, finding an efficient verification problem reduces to finding a suitable recomposition map.
Ultimately, selecting components for efficient verification reduces to finding a suitable recomposition map.
Therefore, we propose a technique for automatically selecting recomposition maps.
We use heuristics to prune the large space of possible recomposition maps, which results in a small portfolio of maps that we run in parallel.

\begin{figure}
    %\centering
    \begin{subfigure}[t]{0.99\columnwidth}
        \centering
        \includegraphics[height=2.0cm]{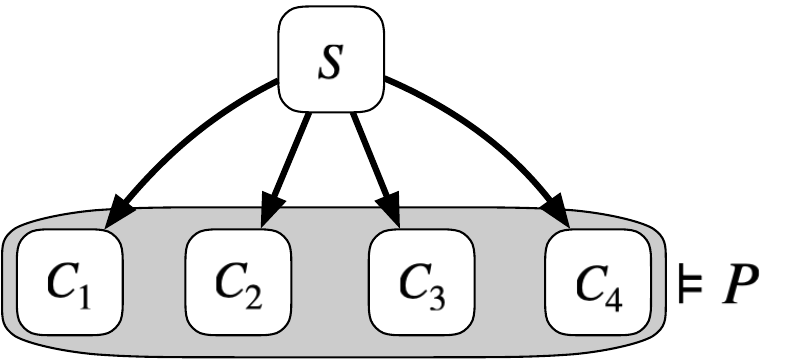}
        \caption{Traditional compositional verification.}
        \label{fig:recomp-visual-1}
    \end{subfigure}
    \begin{subfigure}[t]{0.99\columnwidth}
        \centering
        \includegraphics[height=3.0cm]{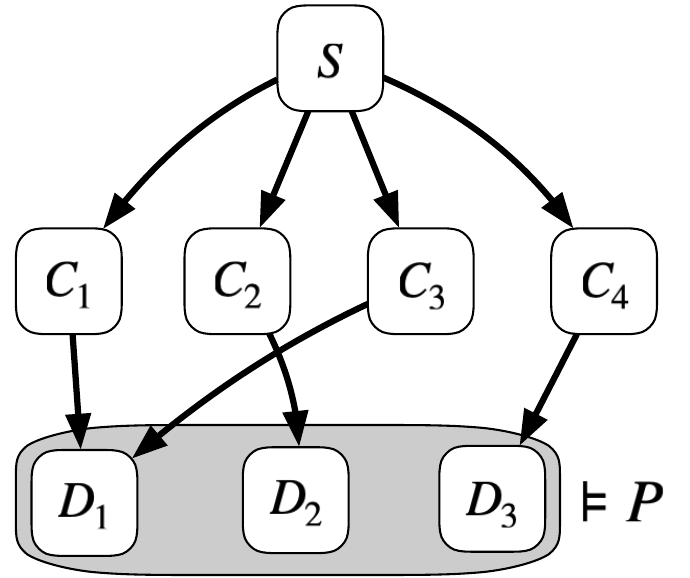}
        \caption{Compositional verification with recomposition.}
        \label{fig:recomp-visual-2}
    \end{subfigure}
    \begin{subfigure}[t]{0.99\columnwidth}
        \centering
        \includegraphics[height=3.0cm]{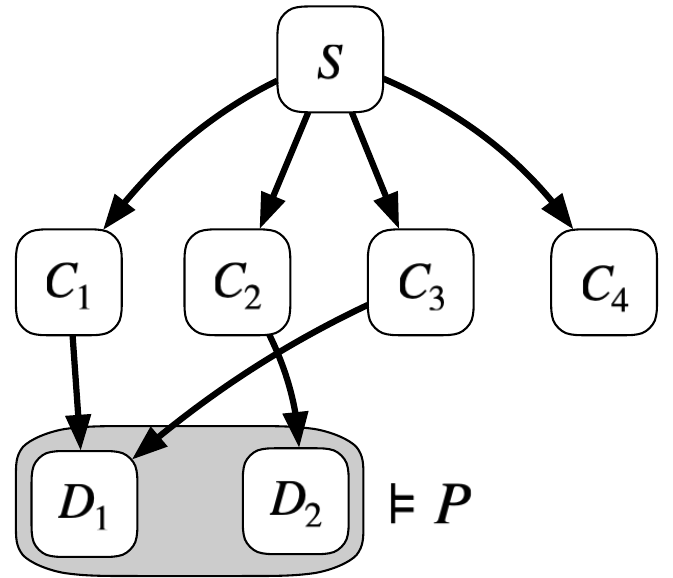}
        \caption{Specification reduction in the recomposition method.}
        \label{fig:recomp-visual-3}
    \end{subfigure}
    \caption{Comparing traditional compositional verification against our recomposition method.}
    \label{fig:recomp-visual}
\end{figure}

We have implemented our techniques in a model checker called ``Recomp-Verify'' for the \TLA{} language \cite{lamport2002specifying}.
In order to bring compositional verification to \TLA{}, we additionally propose a novel parallel composition operator for the language.
We evaluate our techniques by comparing Recomp-Verify to TLC \cite{Yu:1999}, a well-known model checker for \TLA{}.
We show that recomposition can lead to large savings in terms of verification time and the size of the explored state space.

In summary, we make the following contributions:
\textbf{(1) our main contribution, recomposition, which is a technique for efficient compositional verification,}
(2) an automated method for finding efficient recomposition maps using parallelization and heuristics,
(3) a definition for parallel composition for \TLA{} specifications, and
(4) a prototype model checker Recomp-Verify that implements our algorithm, along with an evaluation of Recomp-Verify against TLC on a benchmark of distributed protocols.

\section{Motivating Example}
\label{sec:running-ex}
In this section, we describe the Two Phase Commit Protocol \cite{Gray:2006} to motivate our work and serve as a running example throughout the paper.

\subsubsection{Protocol Description}
In the Two Phase Commit Protocol, a \textit{transaction manager} (TM) attempts to commit a transaction onto a pool of \textit{resource managers} (RMs) in two phases.
In the first phase, each RM starts in the \textit{working} state as it attempts to commit the transaction.
Any RM that can commit a transaction sends a \textit{prepared} message to the TM.
In the second phase, if every RM is prepared, the TM will issue a \textit{commit} message to each RM; otherwise the TM will issue an \textit{abort} message.
The protocol assumes that the network can reorder, but not lose, messages.
The key safety property is for each RM to remain \textit{consistent}; i.e. no two RMs should disagree as to whether a transaction was committed or aborted.

\subsubsection{\TLA{} Encoding}
In Fig.~\ref{fig:2pc-tla}, we show only the first (prepare) phase of the $TwoPhase$ specification, which we will refer to as $TP$ for brevity.
$TP$ is a \textit{parameterized} protocol, meaning that the set of RMs in the protocol is given as input.
In the $TP$ specification, the parameter is indicated on line 1 using the keyword \textsc{constant}.

\begin{figure}
    {\tlatex \@x{}\moduleLeftDash\@xx{ {\MODULE} TwoPhase}\moduleRightDash\@xx{}}
    
    \hspace{10pt}
    \begin{subfigure}[t]{0.7\textwidth}
        \internallinenumbers{\small \input{src/TLA/2pc-monolithic}}
    \end{subfigure}
    {\tlatex \@x{}\bottombar\@xx{}}
    \caption{A monolithic encoding of the Two Phase Commit Protocol.}
    \label{fig:2pc-tla}
\end{figure}

$TP$ defines a symbolic transition system (STS) over four state variables, which are declared on line 2 in Fig.~\ref{fig:2pc-tla}.
The variable $rmState$ is the state of each RM, the variables $tmState$ and $tmPrepared$ hold the state of the TM, and $msgs$ is the set of messages each machine sends over the network.
Line 24 formally declares the STS with initial predicate $Init$ and transition relation $Next$.
We show two actions, $SndPrepare$ and $RcvPrepare$, on lines 14 and 9 respectively.
In \TLA{}, actions are typically conjunctions of guards that specify when an action is enabled (lines 10-11 and 15) as well as primed variable expressions that specify transitions (lines 12 and 16-17).
The \textsc{unchanged} keyword on lines 13 and 18 indicate the frame conditions.

The key safety property for the Two Phase Commit Protocol is the invariant $Consistent$.
We can encode this invariant as the following \TLA{} formula:
{\tlatex
\@pvspace{6.0pt}%
%\@x{ \quad Consistent \.{\defeq}}%
%\@x{\@s{16.4} \quad \A\, rm1 ,\, rm2 \.{\in} RMs \.{:} {\lnot} ( rmState [ rm1 ] \.{=}\@w{aborted} \.{\land} rmState
\@x{\@s{4.0} \A\, rm1 ,\, rm2 \.{\in} RMs \.{:}}
\@x{\@s{5.0} {\lnot} ( rmState [ rm1 ] \.{=}\@w{aborted} \.{\land}\, rmState [ rm2 ] \.{=}\@w{committed} )}
%\@x{\@s{7.0} {\lnot} ( rmState [ rm1 ] \.{=}\@w{aborted} \.{\land}\, rmState [ rm2 ] \.{=}\@w{committed} )}
\@pvspace{6.0pt}}%

\subsubsection{Model Checking \textit{TwoPhase}}
The TLC model checker can prove that a given finite instance of $TP$ satisfies the property $Consistent$.
A finite instance of a protocol substitutes a finite value for each parameter, e.g. a finite set of resource managers for $RMs$ in $TP$.
TLC performs explicit state model checking, meaning that it enumerates every possible state in the transition system.
For nine resource managers, TLC is able to prove $TP$ is safe after generating over 10 million states in nearly ten minutes.
However, for ten resource managers, TLC fails to terminate in an hour after checking over 48 million states.
In the following section, we will show how our approach can scale model checking $TP$ to ten resource managers.

\subsubsection{Compositional Verification and Recomposition}
Consider the specifications $RM$, $Env$, $TM_1$, and $TM_2$ shown in Fig.~\ref{fig:2pc-components}.
These specifications represent a decomposition of $TP$; that is, $TP$ is semantically equal to the parallel composition of the four specifications.
We can generate a labeled transition system (LTS) for each of the four specifications in Fig.~\ref{fig:2pc-components} and then use compositional verification techniques to answer the original model checking problem.
For ten resource managers, this strategy enumerates a maximum of 261,002 states and terminates in 1 minute and 32 seconds.

%The compositional verification problem above is more efficient than TLC, but \textit{we can use recomposition to do even better}.
Compositionally verifying the components above is more efficient than TLC, but \textit{we can use recomposition to do even better}.
Later, in example Ex.~\ref{ex:tp}, we use recomposition to identify new components that are \textit{optimal} in terms of minimum run time for verification.
In general, recomposition can provide large savings in terms of run time and state space.
In Sec.~\ref{sec:results}, we show experimentally that recomposition can reduce a model checking problem by \textit{millions} of states.

\begin{figure}
    \begin{multicols}{2}
        {\small \input{src/TLA/2pc-rm-tla}}
        %\vspace{13pt}
        {\small \input{src/TLA/2pc-tm1-tla}}
        \columnbreak
        {\small \input{src/TLA/2pc-env-tla}}
        {\small \input{src/TLA/2pc-tm2-tla}}
    \end{multicols}

    \caption{A decomposition of $TP$.
        %When composed together, the three specifications above equal the monolithic $TP$ encoding.
        Standard operators such as $Spec$, $Next$, $vars$, etc. are omitted for brevity.}
        %Standard operators such as $Spec$, $Next$, $vars$, etc. are defined in the usual way and are omitted for brevity.}
    \label{fig:2pc-components}
\end{figure}

\section{Preliminaries}
%In this section we formally introduce \emph{labeled transition systems (LTSs)} and a subset of the \TLA{} language that we consider in this paper.
In this section we formally introduce \emph{labeled transition systems (LTSs)}, the \TLA{} language, and the compositional verification technique that we consider in this paper.
Throughout this paper, we will use calligraphic font when referring to LTS variables (e.g. $\mathcal{D}$) and normal font when referring to \TLA{} specifications (e.g. $S$).

\subsection{Labeled Transition Systems}
\label{sec:prelim-lts}
A labeled transition system (LTS) $\mathcal{D}$ is a tuple $(Q,\alpha \mathcal{D}, \delta, I)$ where $Q$ is the set of states, $\alpha \mathcal{D}$ is the alphabet of $\mathcal{D}$, $\delta \subseteq Q \times \alpha \mathcal{D} \times Q$ is the transition relation, and $I$ is a set of initial states.
$\alpha \mathcal{D}$ must be a subset of $\mathbb{A}$, where $\mathbb{A}$ is the universe of all possible actions across all possible LTSs.
%An LTS is \textit{deterministic} if for each pair of transitions $(q,a,q'), (q,a,q'') \in \delta$, we have $q' = q''$.
We let $Reach(\mathcal{D})$ be the set of reachable states in $\mathcal{D}$.
%%%% ad hoc
We define the parallel composition ($\parexp$) over LTSs in the usual way by synchronizing on actions common to both alphabets and interleaving on all other actions \cite{Hoare1978}.

\begin{comment}
We define the parallel composition ($\parexp$) of two LTSs $\mathcal{C}$ and $\mathcal{D}$ as $\mathcal{C} \parexp \mathcal{D} = (Q_\mathcal{C} \times Q_\mathcal{D}, \alpha \mathcal{C} \cup \alpha \mathcal{D}, \delta, I_\mathcal{C} \times I_\mathcal{D})$, where $\delta$ is defined as follows.
Let $t_\mathcal{C} = (q_\mathcal{C},a,q_\mathcal{C}')$ and let $t_\mathcal{D} = (q_\mathcal{D},a,q_\mathcal{D}')$.
Then a transition $((q_\mathcal{C},q_\mathcal{D}),a,(q_\mathcal{C}',q_\mathcal{D}'))$ is in $\delta$ if and only if one of the following holds:
(1) $t_\mathcal{C}$ and $t_\mathcal{D}$ \textit{synchronize}, that is, $a \in \alpha \mathcal{C}$, $a \in \alpha \mathcal{D}$, $t_\mathcal{C} \in \delta_\mathcal{C}$ and $t_\mathcal{D} \in \delta_\mathcal{D}$ or
(2) $t_\mathcal{C}$ and $t_\mathcal{D}$ \textit{interleave}, that is, $a \in \alpha \mathcal{C}$, $a \notin \alpha \mathcal{D}$, $t_\mathcal{C} \in \delta_\mathcal{C}$ and $q_\mathcal{D} = q_\mathcal{D}'$ (or vice versa, with $\mathcal{D}$ and $\mathcal{C}$ swapped).
We will often use the notation $\mathcal{D}_1 \parexp \dots \parexp \mathcal{D}_j$; in the case that $j=0$, we define this expression to be equivalent to $true$, i.e. an LTS that allows all behaviors.
\end{comment}

We define an action-based behavior $\sigma$ as an infinite sequence of actions, i.e. $\sigma \in \mathbb{A}^\omega$, and we let $\sigma_i$ denote the $i$th action in $\sigma$.
We denote the action-based semantics of an LTS $\mathcal{D}$ as a set of action-based behaviors $\aSemantics{\mathcal{D}} \subseteq \mathbb{A}^\omega$.
It is the case that $\sigma \in \aSemantics{\mathcal{D}}$ if and only if there exists a sequence of states $q_0,q_1,\dots \in Q^\omega$ such that $q_0 \in I$ and, for each nonnegative index $i$, either
(1) $\sigma_i \in \alpha \mathcal{D}$ and $(q_i,\sigma_i,q_{i+1}) \in \delta$, or
(2) $\sigma_i \notin \alpha \mathcal{D}$ and $q_i = q_{i+1}$.
Condition (2) allows for \textit{stuttering}, a concept which we will introduce in Sec.~\ref{sec:prelim-tla}.

There are two methods for encoding a safety property as an LTS.
The first method is creating an \textit{error LTS} that includes an error state--which we refer to as the $\pi$ state--that acts as a sink for any action that causes a safety violation.
The second method is creating a \textit{property LTS} whose language defines the safe behaviors; property LTSs must be deterministic and must not include a $\pi$ state.
Any error LTS can be converted to a property LTS using steps two and three for assumption generation (Sec.~3) in \cite{Giann:2002}.
We define property satisfaction over property LTSs as follows: an LTS $\mathcal{D}$ satisfies a property LTS $\mathcal{P}$ ($\mathcal{D} \models \mathcal{P}$) exactly when $\aSemantics{\mathcal{D}} \subseteq \aSemantics{\mathcal{P}}$.
Note that our LTS semantics (with stuttering) properly handles alphabet refinement, and therefore it is unnecessary to consider alphabet restriction \cite{Hoare1978} in our definition of property satisfaction.

\subsection{\TLA{}}
\label{sec:prelim-tla}
In this paper we will refer to a \TLA{} specification $S$ as a syntactic entity that consists of constants, variables and operator definitions, etc. in the format shown in Fig.~\ref{fig:2pc-tla}.
%A specification need not have constants, but each specification must declare at least one state variable.
%Specifications also define a sequence $vars$ that contains every state variable.

The initial state predicate, transition relation, and specification declaration are named $Init$, $Next$, and $Spec$ respectively.
In this paper, $Init$, $Next$, and $Spec$ are restricted to the syntax of $init$, $next$, and $spec$ given by the grammar in Fig.~\ref{fig:tla-grammar}.
In $next$, the domain $D$ does not contain state variables.
We also restrict action definitions to the syntax of $op$, and no actions are referenced in the body of another action.
In the grammar, $\Box$ is the \textit{always} temporal operator.
Expression $[Next]_{vars}$ is equal to $Next \lor (vars' = vars)$ and allows for \textit{stuttering} states, i.e. consecutive states whose variables in $vars$ do not change.

\begin{figure}
    \centering
    \begin{multicols}{2}
        \tlatex
        \small
        \begin{align*}
            spec ::= &Spec \.{\defeq} Init \.{\land} \Box[Next]_{vars}\\
            init ::= &Init \.{\defeq} conj\\
            next ::= &Next \.{\defeq} \E\, x \.{\in} D \.{:} \.disj\\
            expr ::= &\ \sstack{\text{arbitrary \TLA{}} \\ \text{expression}}
        \end{align*}
        \begin{align*}
            conj ::= &\.{\land} expr \mid \.{\land} expr\\[-5pt]
            &\hspace{39pt} conj\\
            disj ::= &\.{\lor} expr \mid \.{\lor} expr\\[-5pt]
            &\hspace{39pt} disj\\
            op ::= &id(p) \.{\defeq} conj
        \end{align*}
    \end{multicols}
    %\vspace{-15pt}
    \caption{Restricted \TLA{} grammar for this paper.}
    \label{fig:tla-grammar}
    %\vspace{-5pt}
\end{figure}

We define several operators over a \TLA{} specification $S$.
The scoping operator $!$ references definitions in $S$, e.g. $TP!SndPrepare$ refers to the $SndPrepare$ action of $TP$ in Fig.~\ref{fig:2pc-tla}.
The operators $\salpha$ and $\calpha$ denote \textit{symbolic actions} and \textit{concrete actions} respectively.
Symbolic actions are the action names in a specification, while concrete actions are the actions that may occur in a finite instance.
For example, let $TP_1$ be the finite instance of $TP$ with $RM = \{$``$rm1"\}$, then $\salpha TP_1 = \salpha TP = \{SndPrepare, RcvPrepare\}$ and $\calpha TP_1 = \{SndPrepare($``$rm1"), RcvPrepare($``$rm1")\}$.
Additionally, we let $\vbeta S$ denote the set of state variables in a specification or an expression, e.g. $\vbeta TP = \{msgs,rmState,tmState,tmPrepared\}$.
For an operator $* \in \{\salpha,\calpha,\vbeta\}$ and a set of specifications $Z$, the notation $* Z$ is short-hand for the union of $* z$, for each specification $z \in Z$.
%The operators $\salpha$, $\calpha$, and $\vbeta$ may also be applied to a set of specifications, which is short-hand for the union of the operation application to each specification in the set.
%The operators $\salpha$, $\calpha$, and $\vbeta$ may also be applied to a set of specifications, which is short-hand for the union of applying the operation to each specification in the set.

To define the semantics of a \TLA{} formula, we first define a state as an assignment to all state variables.
Then, the semantics of a \TLA{} formula is a set of behaviors, where a behavior is an infinite sequence of states.
We indicate state-based semantics of a \TLA{} formula $F$ as $\bSemantics{F}$, the set of behaviors that satisfy $F$.
For a \TLA{} specification $S$, we will often abbreviate $\bSemantics{S!Spec}$ to simply $\bSemantics{S}$.
Given a \TLA{} property $P$, we say $S$ satisfies $P$ ($S \models P$) exactly when $\bSemantics{S} \subseteq \bSemantics{P}$.

We define the operator $\tolts(S)$, which converts a \TLA{} specification $S$ into an LTS $\mathcal{D}$.
$\tolts(S)$ can be realized by generating the full state graph for $S$ and then labeling its edges with the concrete actions $\calpha S$ such that $\alpha \mathcal{D} = \calpha S$.
Additionally, we define two operators for converting \TLA{} properties to an LTS.
The first operator, $\errlts(S,P)$, constructs an error LTS for $S$ where violations of $P$ lead to a $\pi$ state.
The second operator, $\proplts(S,P)$, builds a property LTS for $S$ where no violation of $P$ is possible.
$\proplts(S,P)$ can be constructed from $\errlts(S,P)$, as pointed out in Sec.~\ref{sec:prelim-lts}.

\subsection{CRA-Style Compositional Verification}
%There are several different flavors of compositional verification \cite{Giannakopoulou:2018handbook}, including \textit{assume-guarantee} \cite{Jones:1983,Pnueli:1985} and \textit{compositional reachability analysis} (CRA) \cite{Cheung:1995,Graf:1990,Yeh:1991}.
%While assume-guarantee techniques have been well-studied in the context of model checking, only limited success has been reported for verifying multiple components \cite{Nam:2008,Pasareanu:2006}.
%We require a compositional verification algorithm that works for multiple components, so we choose to use CRA-style techniques, which have reported success for verifying safety properties for multiple components \cite{Cheung:1999}.

In this paper, we consider a style of compositional verification called \textit{compositional reachability analysis} (CRA) \cite{Cheung:1995,Graf:1990,Yeh:1991}.
Our recomposition framework requires a compositional verification algorithm that works for multiple components, and CRA-style techniques have reported success for verifying safety properties of multi-component systems \cite{Cheung:1999}.

CRA is used to check safety by composing the LTS for each component together in a hierarchical fashion; safety is proved if and only if the $\pi$ state is unreachable in the overall system.
Such algorithms generally derive their divide-and-conquer efficiency from two optimizations: intermediate minimization and short-circuiting.
The former involves minimizing the state space of the intermediate LTSs with respect to observational equivalence \cite{milner:1989ccs} during composition.
The latter optimization, short-circuiting, occurs when a strict subset of components are needed for verification to succeed.
In this case, the remaining components (outside the strict subset) can be skipped, and hence short-circuiting provides a \textit{dynamic} form of specification reduction.

\section{Parallel Composition in \TLA{}}
\label{sec:parallel-comp}

%\ek{It might be helpful to briefly mention why you are introducing a new parallel composition operator for TLA+; otherwise, the reader might feel that it comes out of nowhere}
%In this section, we introduce a parallel composition operator over the \TLA{} language that will be central to our recomposition algorithm.
%Throughout this paper, we will use two notions of parallel composition.
%The second is a syntactic notion of parallel composition over \TLA{} specifications; the operator is syntactic because it is defined entirely based on \TLA{} syntax, and does not involve explicitly enumerating the state space.

In this section, we introduce a new parallel composition operator over \TLA{} specifications.
The operator is central to our recomposition algorithm and will allow us to define concepts such as decomposition and recomposition in Sec.~\ref{sec:algorithm}.
The new operator is syntactic; in other words, the definition is entirely in terms of \TLA{} syntax, and does not involve explicitly enumerating the state space.
To avoid confusion between the parallel composition operator $\parexp$ over LTSs (Sec.~\ref{sec:prelim-lts}), we will denote the \TLA{} parallel composition operator using $\parsym$.
We will use the notation $\parsym Z$ to denote the composition over a set of specifications $Z$.
We now define $\parsym$ in the usual way, by synchronizing common actions between specifications and interleaving all others actions \cite{Hoare1978}.
\begin{definition}[Parallel Composition]
    \label{def:parallel-comp}
    Let $S$ and $T$ be \TLA{} specifications with distinct state variables. We define $S \parsym T$ as follows.
    First, $S \parsym T$ contains exactly the constants and state variables in $S$ and $T$.
    Second, in $S \parsym T$, we define $vars \triangleq (S!vars) \circ (T!vars)$, where $\circ$ is the sequence concatenation operator.
    Finally, in $S \parsym T$, we define $Spec \triangleq Init \land [Next]_{vars}$ where $Init \triangleq S!\text{Init} \land T!\text{Init}$ and $Next$ is defined as follows:
    \begin{align*}
        &\bigvee_{A \in \aalpha S \cup \aalpha T} \left\{ \tlastate{
            \exists d \in D \st S!A(d) \ \land \ T!A(d) \\
            \quad \text{if} \ A \in \aalpha S \ \text{and} \ A \in \aalpha T \\
            \exists d \in D \st S!A(d) \ \land \ T!vars' = T!vars \\
            \quad \text{if} \ A \in \aalpha S \ \text{and} \ A \notin \aalpha T \\
            \exists d \in D \st T!A(d) \ \land \ S!vars' = S!vars \\
            \quad \text{if} \ A \notin \aalpha S \ \text{and} \ A \in \aalpha T
        } \right.
    \end{align*}
\end{definition}
%\begin{remark}
    %Let $Z$ be a set of \TLA{} specifications, then we use the notation $\parsym Z$ to denote the result of composing every specification in $Z$.
    %We will use this notation in Alg.~\ref{alg:recomp-verify}.
%\end{remark}
Notice that Def.~\ref{def:parallel-comp} defines parallel composition in terms of \TLA{} syntax, and hence does not increase the expressivity of the language.
While the operator itself is novel, this technique is briefly discussed by Lamport \cite{lamport2002specifying}.
\begin{example}
    By Def.~\ref{def:parallel-comp}, $TP = RM \parsym Env \parsym TM_1 \parsym TM_2$.
    Furthermore, consider specifications $T_1$ and $T_2$ from Fig.~\ref{fig:2pc-t1-t2}.
    Notice that $TP = RM \parsym T_1$, $T_1 = Env \parsym T_2$, and $T_2 = TM_1 \parsym TM_2$.
\end{example}
\begin{figure}
    \begin{multicols}{2}
        {\small \input{src/TLA/2pc-T1-tla}}
        %\columnbreak
        {\small \input{src/TLA/2pc-T2-tla}}
    \end{multicols}

    \caption{Intermediate decomposed specifications $T_1$ and $T_2$ in the $TP$ example.}
    \label{fig:2pc-t1-t2}
\end{figure}

The following theorem shows that parallel composition behaves exactly as we expect if we convert a \TLA{} specification to an LTS.
We prove this theorem using more general semantics for \TLA{} specifications, namely action-state-based semantics.
We include a proof in Appendix~\ref{apx:composition-sem-proof}.
\begin{theorem}
    \label{thm:composition-sem}
    $\aSemantics{\tolts(S \parsym T)} = \aSemantics{\tolts(S) \parexp \tolts(T)}$.
\end{theorem}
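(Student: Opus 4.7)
The plan is to route the proof through an intermediate action-state semantics $\abSemantics{\cdot}$ for \TLA{} specifications, in which a behavior is an infinite sequence $(q_0, a_0, q_1, a_1, q_2, \dots)$ of alternating states and concrete actions satisfying $\mathit{Init}$ at $q_0$ and $a_i(q_i, q_{i+1}) \in \mathit{Next}$ (with $q_{i+1}=q_i$ counted as a stutter). The key observation I would establish first is a projection lemma: for any \TLA{} specification $X$ in the grammar of Fig.~\ref{fig:tla-grammar}, $\aSemantics{\tolts(X)}$ is exactly the set of action-projections of behaviors in $\abSemantics{X}$, padded with arbitrary stutter-actions from $\mathbb{A} \setminus \calpha X$. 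This follows almost directly from how $\tolts$ is constructed---edges of the full state graph are labeled by $\calpha X$---together with the stuttering clause in the definition of $\aSemantics{\cdot}$ from Sec.~\ref{sec:prelim-lts}.

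Next, I would prove a decomposition lemma at the action-state level: $\abSemantics{S \parsym T}$ equals the set of action-state behaviors whose projections onto the variables $\vbeta S$ and onto $\vbeta T$ lie in $\abSemantics{S}$ and $\abSemantics{T}$ respectively. The proof is a case analysis on each transition step, mirroring exactly the three disjuncts of $\mathit{Next}$ in Def.~\ref{def:parallel-comp}. When $A \in \salpha S \cap \salpha T$, the conjunction $S!A(d) \land T!A(d)$ forces synchronized steps in both projections; when $A$ belongs to only one side, the \texttt{UNCHANGED} conjunct on the other side's $vars$ forces a stutter in that projection. Symbolically, this step reduces to unfolding Def.~\ref{def:parallel-comp} and using that $S$ and $T$ have disjoint state variables, so the combined next-state relation factors cleanly over $\vbeta S$ and $\vbeta T$.

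With these two lemmas in hand, the theorem is a diagram-chase. The action-projection of $\abSemantics{S \parsym T}$---which is $\aSemantics{\tolts(S \parsym T)}$ by the projection lemma---coincides, by the decomposition lemma, with the set of action sequences admitting compatible state-traces in both $S$ and $T$ with synchronization on $\salpha S \cap \salpha T$ and independent (stuttering) progress elsewhere. Applying the projection lemma in reverse to each of $S$ and $T$, this is precisely the language $\aSemantics{\tolts(S) \parexp \tolts(T)}$ as defined in Sec.~\ref{sec:prelim-lts}, since the LTS operator $\parexp$ synchronizes on the shared alphabet and interleaves (with stuttering) elsewhere.

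The main obstacle I anticipate is bookkeeping around stuttering and the passage between \emph{symbolic} actions $\salpha$ (what appears syntactically in Def.~\ref{def:parallel-comp}) and \emph{concrete} actions $\calpha$ (what labels the LTS produced by $\tolts$). In particular, one must check that the parameter instantiation $A(d)$ used in the three cases of Def.~\ref{def:parallel-comp} yields the \emph{same} concrete action in $S$ and $T$ when $A$ is shared, so that synchronization at the LTS level matches the conjunction at the \TLA{} level; and that the \texttt{UNCHANGED} clauses correctly correspond to the $q_i = q_{i+1}$ stutter condition in the action-based semantics of an LTS whose alphabet does not contain the occurring action. Once those correspondences are nailed down, the rest of the argument is essentially unfolding of definitions.
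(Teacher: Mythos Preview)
Your proposal is correct and follows essentially the same route as the paper: both factor through an action-state-based semantics $\abSemantics{\cdot}$, establish a decomposition property for $\parsym$ at that level (the paper phrases it as the clean intersection $\abSemantics{S \parsym T} = \abSemantics{S} \cap \abSemantics{T}$, exploiting that \TLA{} states assign \emph{all} variables so no explicit state-projection is needed), and then project back to action sequences. The paper additionally isolates the LTS-side fact $\aSemantics{\mathcal{A} \parexp \mathcal{B}} = \aSemantics{\mathcal{A}} \cap \aSemantics{\mathcal{B}}$ as a separate lemma, which is exactly the final step of your diagram chase.
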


\section{Model Checking with Recomposition}
\label{sec:algorithm}
In this section, we propose our algorithm for verifying symbolic specifications.
We begin by introducing the algorithm in Sec.~\ref{sec:recomp-verify-alg}. %, followed by the details of decomposition in Sec.~\ref{sec:sym-decomposition} and the details for compositional verification in Sec.~\ref{sec:comp-verif}.
Subsequently, we provide details for decomposition (Sec.~\ref{sec:sym-decomposition}), static specification reduction (Sec.~\ref{sec:static-spec-reduction}), and compositional verification (Sec.~\ref{sec:comp-verif}).
Finally, we conclude this section with a correctness analysis of the algorithm in Sec.~\ref{sec:correctness-analysis}.

\subsection{The Recomp-Verify Algorithm}
\label{sec:recomp-verify-alg}

\subsubsection{Algorithm Overview}
\label{sec:recomp-verify-overview}
Our algorithm solves a model checking problem $S \models P$, where $S$ and $P$ are both written in \TLA{}.
The algorithm begins by decomposing $S$ into $n$ components $C_1, \dots, C_n$, each of which is also a \TLA{} specification.
The decomposition algorithm ensures two key properties upon termination:
(P1) $S = C_1 \parsym \dots \parsym C_n$ and
(P2) the first component, $C_1$, contains all state variables that occur syntactically in $P$.
Property (P1) ensures soundness of the decomposition, while property (P2) allows us to build the safety property $\prop$ described in the following paragraph.

After decomposition, the algorithm \textit{recomposes} the $C_i$ components into new components $D_P$ and $D_1, \dots, D_m$.
These new components define the following compositional verification problem that is equivalent to the original: $\tolts(D_1) \parexp \dots \parexp \tolts(D_m) \models \prop$, where $\prop = \proplts(D_P,P)$.
For $\proplts(D_P,P)$ to be well-formed, $D_P$ must contain every state variable that occurs in $P$.
Therefore, we require $D_P$ to be composed of (at least) $C_1$, as $C_1$ must contain every state variable that occurs in $P$ by property (P2) of decomposition.
We formally capture this requirement, as well as the choice of how to perform recomposition, in the following definition.
\begin{definition}[Recomposition Map]
    \label{def:recomp-map}
    A \textit{recomposition map} is a surjective function $f : \{C_1,\dots,C_n\} \rightarrow \{d_P,d_1,\dots,d_m\}$ such that $f(C_1) = d_P$.
\end{definition}
In Def.~\ref{def:recomp-map}, the $d_j$'s in the co-domain are intended as a placeholder for constructing each $D_j$.
In particular, we will define each recomposed component as $D_j = \parsym f^{-1}(d_j)$, the parallel composition of one or more $C_i$ components.
Therefore, the restriction $f(C_1) = d_P$ implies that $D_P$ will be composed of $C_1$ as intended.
Finally, once each $D_j$ is constructed, we solve the compositional verification problem.

\subsubsection{Algorithm Details}
\label{sec:recomp-verify}
We present our model checking algorithm in Alg.~\ref{alg:recomp-verify}.
The algorithm accepts several inputs, including a \textit{recomposition strategy}.
A recomposition strategy $\rho$ is a function that maps $C_i$ components to a pair $(f,m)$, where $f$ is a total recomposition map and $m$ is the number of $D_j$ components.
In other words, the recomposition strategy determines which recomposition map is used.
%We require strategies to produce a total recomposition map to ensure verification is sound; however, in Sec.~\ref{sec:static-spec-reduction}, we detail how static specification reduction may alter $f$ to be partial while maintaining soundness.
In the remainder of this section we assume $\rho$ is given; we discuss recomposition strategy selection in Sec.~\ref{sec:parallel-alg}.

\begin{algorithm}
\centering
\caption{\textsc{Recomp-Verify}}
\label{alg:recomp-verify}
\renewcommand{\algorithmicrequire}{\textbf{Input:}}
\renewcommand{\algorithmicensure}{\textbf{Output:}}
\begin{algorithmic}[1]
\Require{Specification $S$, property $P$, recomposition strategy $\rho$}
\Ensure{If $S \models P$}
\State $C_1, \dots, C_n = \textsc{Decompose}(S,P)$
%\State $f,m \leftarrow \textsc{Static-Reduce}(\rho(C_1, \dots, C_n))$
\State $f,m \leftarrow \rho(C_1, \dots, C_n)$
\State $f,m \leftarrow \textsc{Static-Reduce}(f,m)$
\State $D_P \leftarrow \parsym f^{-1}(d_P)$ \Comment{$f^{-1}(d_P) \subseteq \{C_1, \dots, C_n\}$}
\For{$j \in \{1,\dots,m\}$}
    \State $D_j \leftarrow \parsym f^{-1}(d_j)$ \Comment{$f^{-1}(d_j) \subseteq \{C_2, \dots, C_n\}$}
\EndFor
\State \Return $\textsc{Comp-Verify}(D_1,\dots,D_m,D_P,P)$
\end{algorithmic}
\end{algorithm}

Alg.~\ref{alg:recomp-verify} begins by decomposing $S$ into components on line 1. %; we provide more detail for decomposition in Sec.~\ref{sec:sym-decomposition}.
The strategy $\rho$ selects a recomposition map on line 2, which is possibly statically reduced on line 3. %; we provide more detail for static specification reduction in Sec.~\ref{sec:static-spec-reduction}.
We provide more detail for decomposition and static specification reduction in Sec.~\ref{sec:sym-decomposition} and Sec.~\ref{sec:static-spec-reduction} respectively.
Next, on lines 4-6, we perform recomposition using the recomposition map $f$.
On line 4, we define $D_P$ to be the parallel composition of each $C_i$ component in the pre-image $f^{-1}(d_P)$.
Similarly, on line 6, we define each $D_j$ to be the parallel composition of each $C_i$ component in the pre-image $f^{-1}(d_j)$.
Finally, on line 7, we solve the compositional verification problem for the recomposed components ($D_j$'s); we provide more detail for this step in Sec.~\ref{sec:comp-verif}.

\begin{example}
    \label{ex:tp}
    In this example we analyze Alg.~\ref{alg:recomp-verify} given the input $TP$, $Consistent$, and a hand-crafted optimal recomposition strategy $\rho_{opt}$.
    Line 1 of Alg.~\ref{alg:recomp-verify} produces the components $RM$, $Env$, $TM_1$, and $TM_2$ from Fig.~\ref{fig:2pc-components}.
    On line 2, $\rho_{opt}$ chooses $m=2$ and $f$ such that $f(RM) = d_P$, $f(Env) = f(TM_1) = d_1$, and $f(TM_2) = d_2$.
    Static specification reduction on line 3 has no effect on $f$ and $m$.
    Recomposition (lines 4-6) reduces the original model checking problem to $\tolts(Env \parsym TM_1) \parexp \tolts(TM_2) \models \proplts(RM,Consistent)$, which we solve on line 7.
    Whereas the example in Sec.~\ref{sec:running-ex} verifies four specifications (for $RM$, $Env$, $TM_1$, $TM_2$), this example verifies three (for $RM$, $Env \parsym TM_1$, $TM_2$).
    The strategy $\rho_{opt}$ in this example reduces the maximum state space by 1,027 states and improves the model checking time from 1 minute 32 seconds to 51 seconds.
\end{example}

\begin{comment}
\begin{figure}
    %\vspace{-10pt}
    \centering
    \begin{subfigure}[t]{0.45\textwidth}
        \centering
        \includegraphics[height=3.0cm]{recomp-verify FMCAD24/Figs/tpm-naive-recomp.pdf}
        \caption{Naive recomposition map.}
        \label{fig:naive-recomp-map}
    \end{subfigure}
    \hfill %\hspace{31pt}%
    \begin{subfigure}[t]{0.45\textwidth}
        \centering
        \includegraphics[height=3.0cm]{recomp-verify FMCAD24/Figs/tpm-optimal-recomp.pdf}
        \caption{Optimal recomposition map.}
        \label{fig:optimal-recomp-map}
    \end{subfigure}
    \caption{We show two possible recomposition maps for $TP$, the naive choice (a) and the optimal choice (b).}
    \label{fig:recomp-maps}
    %\vspace{-10pt}
\end{figure}
\end{comment}

\subsection{Decomposition}
\label{sec:sym-decomposition}
In this section, we present an algorithm for decomposing a symbolic specification $S$ into $n$ components $C_1 \dots C_n$.
Our algorithm guarantees the following two properties:
(P1) $S = C_1 \parsym \dots \parsym C_n$ and
(P2) $\vbeta P \subseteq \vbeta C_1$.
We provide a correctness argument for these two properties in Sec.~\ref{sec:correctness-analysis}.

\subsubsection{Decomposition Algorithm}
Each step of the algorithm splits a specification $T_i$ into two specifications $C_{i+1}$ and $T_{i+1}$ such that $T_i = C_{i+1} \parsym T_{i+1}$.
We note the following two corner cases: $T_0 = S$ and $C_n = T_{n-1}$.
The algorithm splits a specification across two phases: \textit{state variable partitioning} and \textit{specification slicing}.
The former partitions the state variables of $T_i$ into two sets $V_C$ and $V_T$, while the latter slices $T_i$ into $C_{i+1}$ and $T_{i+1}$ that contain the variables $V_C$ and $V_T$ respectively.
We present the algorithm in Alg.~\ref{alg:decompose}.
We now explain state variable partitioning and specification slicing in detail across the following two sections.

\begin{example}
    \label{ex:decomp-alg}
    We explain Alg.~\ref{alg:decompose} given $TP$ and $Consistent$.
    The algorithm begins with the partition $V_C = \{rmState\}$ and $V_T = \{msgs,rmState,rmPrepared\}$ on line 1; we explain partitioning in Sec.~\ref{sec:state-var-partitioning}.
    Next, on lines 6-7, the algorithm slices $TP$ into $RM$ (Fig.~\ref{fig:2pc-components}) and $T_1$ (Fig.~\ref{fig:2pc-t1-t2}).
    The state variables of $T_1$ are subsequently partitioned into $V_C = \{msgs\}$ and $V_T = \{rmState,rmPrepared\}$ on line 9.
    The algorithm continues in this fashion until $V_T = \emptyset$, i.e. no partition is possible.
    The algorithm will then exit the loop and return the components $RM$, $Env$, $TM_1$, $TM_2$ on line 12.
\end{example}

%%%%%%%%%%%%%%%%%%%%%%%%%%%%%%%%%%%%%%%%%%%%%%%%%%%%%%%%%%%%%%%%%%%%%%%%%%%%%%%%%%
\begin{algorithm}[H]
\centering
\caption{\textsc{Decompose}}
\label{alg:decompose}
\renewcommand{\algorithmicrequire}{\textbf{Input:}}
\renewcommand{\algorithmicensure}{\textbf{Output:}}
\begin{algorithmic}[1]
    \Require{Specification $S$, Safety Property $P$}
    \Ensure{$C_1, \dots, C_n$ with properties (P1) and (P2)}
    \State $V_C, V_T \leftarrow \textsc{Partition}(S,\vbeta P)$
    \If{$V_T = \emptyset$}
        \State \Return $S$
    \EndIf
    \State $T_0 \leftarrow S$, $i \leftarrow 0$
    \While{$V_T \neq \emptyset$}
        \State $C_{i+1} \leftarrow \textsc{Slice}(T_i, V_C)$
        \State $T_{i+1} \leftarrow \textsc{Slice}(T_i, V_T)$
        \State $v \in \vbeta T_{i+1}$ \Comment{Nondeterministically choose a variable}
        \State $V_C, V_T \leftarrow \textsc{Partition}(T_{i+1},\{v\})$
        \State $i \leftarrow i+1$
    \EndWhile
    \State $n \leftarrow i+1$, $C_n \leftarrow T_i$
    \State \Return $C_1, \dots, C_n$
    \Procedure{Partition}{$T,V$}
    \State $V_C \leftarrow \fixpoint(\textsc{Occurs}_T,V)$
    \State $V_T \leftarrow \vbeta S - V_C$
    \State \Return $V_C, V_T$
    \EndProcedure
    \Procedure{$\text{Occurs}_S$}{$V$} 
    \State \Return $\bigcup\limits_{A \in \salpha S} \{ \vbeta c \mid c \in \text{Conj}(A) \text{ and } \vbeta c \cap V \neq \emptyset \}$
    %\Statex \hspace{10px} $\bigcup\limits_{A \in \salpha S} \{ \vbeta c \mid c \in \text{Conj}(A) \text{ and }$
    %{\tlatex \@pvspace{-6.0pt}}
    %\Statex \hspace{54pt} $\vbeta c \cap V \neq \emptyset \}$
    \EndProcedure
    \Procedure{$\fixpoint$}{$op,X$}
    \State $Y \leftarrow X \cup op(X)$
    \If{X = Y} \Return $X$
    \EndIf
    \State \Return $Y \cup \fixpoint(op,Y)$
    \EndProcedure
\end{algorithmic}
\end{algorithm}
%%%%%%%%%%%%%%%%%%%%%%%%%%%%%%%%%%%%%%%%%%%%%%%%%%%%%%%%%%%%%%%%%%%%%%%%%%%%%%%%%%

\subsubsection{State Variable Partitioning}
\label{sec:state-var-partitioning}
Given a specification $T_i$, the partitioning phase partitions the variables $\vbeta T_i$ into two sets $V_C$ and $V_T$.
The partition procedure appears twice in Alg.~\ref{alg:decompose}.
The first occurrence, on line 1, determines the state variables that will appear in $C_1$; therefore, to uphold property (P2), we partition on $\vbeta P$.
In the second appearance, on line 9, we choose just \textit{one} variable in attempt to produce as many components as possible (ideally, one component per state variable).
We are free to choose the one variable nondeterministically because the order of decomposed components is inconsequential; this is due to the fact that the ordering is ultimately determined by a recomposition map in Alg.~\ref{alg:recomp-verify}.

The partition procedure in Alg.~\ref{alg:decompose} also guarantees that the state variables in each partition will constitute a well-formed slice according to the grammar in Fig.~\ref{fig:tla-grammar}.
For example, if a specification contains the expression $a = b + 1$, then $a$ and $b$ should be grouped together into the same partition.
To accomplish this, we let $V_C$ be $V$ \textit{plus} any variables that occur within the same expression, repeated until fix-point.
More formally, we let $V_C = \fixpoint(\textsc{Occurs}_S,V)$ (line 14), where $\fixpoint$ invokes the $\textsc{Occurs}_S$ procedure, initially on $V$, until a fix-point is reached.
Finally, we choose $V_T$ to be the remainder of the state variables in $T_i$ (line 15).
\begin{example}
    Notice that $\vbeta Consistent = \{rmState\}$ and $\fixpoint(\textsc{Occurs}_{TP},$ $\{rmState\}) = \{rmState\}$.
    Therefore, the first partition (line 1) will be $V_C = \{rmState\}$ and $V_T = \{msgs,tmState,tmPrepared\}$.
    In the second partition (lines 8-9), we arbitrarily choose $v = msgs$, which results in $V_C = \{msgs\}$ and $V_T = \{tmState,tmPrepared\}$.
\end{example}

\subsubsection{Specification Slicing}
The specification slicing phase restricts a specification $T_i$ to a given subset of its variables $V$.
Slicing can be seen as the inverse of parallel composition.
For example, consider a system specification $M$ with action $Action$ and state variables $var_1$ and $var_2$:
\vspace{5pt}
{\tlatex
\@x{\@s{68} Action \.{\defeq} \.{\land} var_1 \.{'} \.{=} \@w{val1}}%
 \@x{\@s{68} \@s{48.8} \.{\land} var_2 \.{'} \.{=} \@w{val2}}
 %\@x{\@s{76} \@s{39.9} \.{\land} var_2 \.{'} \.{=} \@w{val2}}
}\\[5pt]
Given the variable partition $\{var_1\}$, $\{var_2\}$, we can view $M$ as the composition of two components $M_1$ and $M_2$ that respectively define: ${\tlatex Action \.{\defeq} var_1 \.{'} \.{=} \@w{val1}}$ and ${\tlatex Action \.{\defeq} var_2 \.{'} \.{=} \@w{val2}}$.
In particular, we have $M_1 = \textsc{Slice}(M,\{var_1\})$, $M_2 = \textsc{Slice}(M,\{var_2\})$, and $M = M_1 \parsym M_2$.
In the $TP$ example, this corresponds to $TP = RM \parsym T_1$ in Ex.~\ref{ex:decomp-alg}.
We include more details on slicing, including the definition for the slicing procedure, in Appendix~\ref{apx:slicing-procedure}.

\subsection{Static Specification Reduction}
\label{sec:static-spec-reduction}

In Sec.~\ref{sec:recomp-verify-alg}, we require recomposition strategies to produce a total recomposition map.
Total recomposition maps apply verification to every component; however, in some cases, not every component is necessary for verification.
Therefore, in the following paragraph, we introduce a technique for statically detecting a subset of components that are necessary for verification.
In Alg.~\ref{alg:recomp-verify}, the procedure $\textsc{Static-Reduce}(f,m)$ on line 3 restricts the domain of $f$ to this subset and reduces the codomain and $m$ accordingly so $f$ remains surjective.

\begin{comment}
However, in the following paragraph, we show how to statically compute a subset of the components that are necessary for verification.
%However, we are able to statically compute a subset of the components that are necessary for verification.
The procedure $\textsc{Static-Reduce}(f,m)$ restricts the domain of $f$ to the components that are necessary for verification, and may cause $f$ to be partial.
%Therefore, the procedure $\textsc{Static-Reduce}(f,m)$ may alter a recomposition map $f$ to be partial, defined only for the components that are necessary for verification.
This procedure is a \textit{static specification reduction} technique, akin to program slicing \cite{Bruckner:2005,Weiser:1982}.
%We therefore allow recomposition maps to be partial, defined only for the components that are necessary for verification.
%Partial recomposition maps are a \textit{static specification reduction} technique, akin to program slicing \cite{Bruckner:2005,Weiser:1982}.
\end{comment}

The subset of necessary components is those whose alphabets may affect--either directly or indirectly--the actions of $C_1$, and therefore may prevent the entire system from reaching an error.
More formally, the subset of components is $\bigcup_i X_i$, where $X_0 = \{C_1\}$ and $X_{i+1} = \{ C_j \mid \salpha C_j \cap \salpha X_i \neq \emptyset \}$.
In Appendix~\ref{apx:static-reduction}, we show that it is only necessary to consider the first $n+1$ terms--where $n$ is the number of $C_i$ components--when computing the union of the $X_i$'s.
Intuitively, $X_1$ is the set of components that may directly prevent $C_1$ from reaching an error, while $X_2$, $X_3$, etc. may indirectly prevent an error.
\begin{example}
    \label{ex:tp-counter}
    We now introduce $TPCounter$, an extension to $TP$.
    $TPCounter$ is identical to $TP$, except it includes one more state variable $counter$ and one more action $Increment$.
    In the initial state, $counter$ is equal to zero.
    Each original action from $TP$ leaves $counter$ unchanged, while $Increment$ increments $counter$ by one and leaves all other state variables unchanged.
    The $Increment$ action is always enabled, and therefore $TPCounter$ is an infinite-state protocol.

    Consider model checking $TPCounter \models Consistent$ with Alg.~\ref{alg:recomp-verify}.
    Decomposition (line 1) produces five components: $RM$, $Env$, $TM_1$, $TM_2$, and $Counter$, where $Counter$ has one state variable $counter$ and one action $Increment$.
    $Counter$ is the only specification with the action $Increment$ and, therefore, does not synchronize with the actions in the other four specifications.
    Therefore, $Counter$ cannot affect the safety of $C_1$.
    Formally, $X_0 = \{RM\}$, $X_1 = \{RM,Env\}$, $X_2 = \{RM,Env,TM_1,TM_2\}$, $X_3 = X_2$, etc. so $Counter$ is not a necessary component.
    \textsc{Static-Reduce} will therefore omit $Counter$ from the domain of any given recomposition map, causing Alg.~\ref{alg:recomp-verify} to successfully terminate.
\end{example}

\begin{comment}
Thm.~\ref{thm:static-reduction} formally shows that it is sound and complete to reduce verification to the necessary components; we prove this theorem in Appendix~\ref{apx:static-reduction}.
\begin{theorem}
    \label{thm:static-reduction}
    $S \models P$ if and only if $\parsym \left( \bigcup_i X_i \right) \models P$.
\end{theorem}
\end{comment}

\subsection{Compositional Verification}
\label{sec:comp-verif}
We present a CRA-style compositional verification algorithm in Alg.~\ref{alg:comp-verify}.
The algorithm works by iteratively composing the LTS for each component $D_j$ together (line 5) until the $\pi$ state becomes unreachable, in which case verification succeeds (lines 3 and 7).
If the $\pi$ state remains reachable by the end of the algorithm, however, then we report a failure (line 8).
The algorithm performs intermediate minimization on lines 1 and 5.
In general, there are many options for which components--or composition of components--to minimize \cite{Sabnani:1989}.
We choose to only minimize components because we observed that minimizing the composition of components was generally slow.
In essence, this algorithm is an abstraction-refinement loop where each new component lowers the abstraction by introducing more state variables.

\begin{example}
    \label{ex:tp-comp-verif}
    Consider $TP$ with ten resource managers and the optimal mapping $f$ from Ex.~\ref{ex:tp}, where $D_P = RM$, $D_1 = Env \parsym TM_1$, and $D_2 = TM_2$.
    Line 1 of Alg.~\ref{alg:comp-verify} will generate an LTS for $D_P$ with 477,454 states, including a $\pi$ state.
    Minimization reduces $D_P$ to 13,291 states.
    Due to a reachable $\pi$ state, the algorithm proceeds into the loop on line 4.
    Next, on line 5, the algorithm generates an LTS for $D_1$ with 3,072 states, which reduces to 1,026 states after minimization.
    Composing this LTS with $\mathcal{D}$ (line 5) retains the $\pi$ state (line 6) so we loop again.
    %The algorithm generates an LTS for $D_2$ with 1,024 states, and minimization provides no reduction.
    %However, after composing this LTS with $\mathcal{D}$ (line 5), the $\pi$ state is no longer reachable and hence we terminate with a positive answer (line 6 and 7).
    The algorithm continues in this fashion until a $\pi$ state is no longer reachable, and we return a positive answer (line 6 and 7).
    A maximum of 481,550 states are needed in memory at once.

    \begin{comment}
    Consider $TP$ with ten resource managers and the optimal mapping $f$ from Ex.~\ref{ex:tp}. %where $f(Env) = f(TM_1)$. %(shown in Fig.~\ref{fig:optimal-recomp-map}).
    %Consider $TP$ with ten resource managers and the optimal mapping $f$ shown in Fig.~\ref{fig:optimal-recomp-map}.
    Recall that, in this case, $D_P = RM$, $D_1 = Env \parsym TM_1$, and $D_2 = TM_2$.
    Line 1 of Alg.~\ref{alg:comp-verify} will generate an LTS for $D_P$ that includes 477,454 states and a $\pi$ state.
    Minimization reduces $D_P$ to 13,291 states.
    The LTS has a reachable $\pi$ state so the algorithm proceeds into the loop on line 4.
    Next, the algorithm will generate an LTS for $D_1$ with 3,072 states, which reduces to 1,026 states after minimization.
    Composing this LTS with $\mathcal{D}$ (line 5) retains the $\pi$ state (line 6) so we loop again.
    The algorithm generates an LTS for $D_2$ with 1,024 states, and minimization provides no reduction.
    However, after composing this LTS with $\mathcal{D}$ (line 5), the $\pi$ state is no longer reachable and hence we terminate with a positive answer (line 6 and 7).
    The maximum number of states needed in memory at once for this algorithm is 481,550.
    \end{comment}
\end{example}
%\vspace{-9pt}
\begin{algorithm}
\centering
\caption{\textsc{Comp-Verify}}
\label{alg:comp-verify}
\renewcommand{\algorithmicrequire}{\textbf{Input:}}
\renewcommand{\algorithmicensure}{\textbf{Output:}}
\begin{algorithmic}[1]
\Require{$D_1,\dots,D_m,D_P,P$}
\Ensure{If $\tolts(D_1) \parexp \dots \parexp \tolts(D_m) \models \prop$}
\State $\mathcal{D} \leftarrow Min(\errlts(D_P,P))$ \Comment{$\prop = \errlts(D_P,P)$}
\If{$\pi \notin Reach(\mathcal{D})$}
    \State\Return $true$
\EndIf
\For{$j \in \{1,\dots,m\}$}
    \State $\mathcal{D} \leftarrow \mathcal{D} \parexp Min(\tolts(D_j))$
    \If{$\pi \notin Reach(\mathcal{D})$}
        \State\Return $true$
    \EndIf
\EndFor
\State\Return $false$
\end{algorithmic}
%\vspace{-3pt}
\end{algorithm}

\subsection{Correctness Analysis}
\label{sec:correctness-analysis}

In this section, we show that Alg.~\ref{alg:recomp-verify} is sound but not complete.
To establish this result, we first provide lemmas for the correctness of decomposition (Lem.~\ref{lem:decompose-correct}), static specification reduction (Lem.~\ref{lem:static-reduction}), and compositional verification (Lem.~\ref{lem:comp-verif}).
Next, we show that reducing the monolithic model checking problem to compositional verification is correct (Lem.~\ref{lem:mono-to-comp-verif}).
Finally, we present Thm.~\ref{thm:recomp-verif-soundness} that shows that Alg.~\ref{alg:recomp-verify} is sound.
%Finally, soundness (Thm.~\ref{thm:recomp-verif-soundness}) follows as a direct consequence of these three lemmas.
%Therefore, we begin by presenting two lemmas: Lem.~\ref{lem:decompose-correct} for decomposition correctness and Lem.~\ref{lem:comp-verif} for compositional verification correctness.
%Therefore, we first present Lem.~\ref{lem:decompose-correct} and Lem.~\ref{lem:comp-verif} to establish correctness for decomposition and compositional verification respectively.
%Theorems \ref{thm:decompose-correct} and \ref{thm:comp-verif} establish correctness for decomposition and compositional verification respectively; we provide proofs in Appendix~\ref{apx:comp-verif-correctness}.
%The following theorem shows that the decomposition algorithm adheres to properties (P1) and (P2).
%We provide a proof in Appendix~\ref{apx:slicing-procedure}.
\begin{lemma}
    \label{lem:decompose-correct}
    Algorithm~\ref{alg:decompose} ensures
    (P1) $S = C_1 \parsym \dots \parsym C_n$ and
    (P2) $\vbeta C_1 \subseteq \vbeta P$
    upon termination.
\end{lemma}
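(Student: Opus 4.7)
The plan is to treat (P1) and (P2) as independent sub-goals. For (P1), I would argue by induction on the number of while-loop iterations in Algorithm~2, carrying the invariant $S = C_1 \parsym \cdots \parsym C_i \parsym T_i$. The base case is immediate from $T_0 = S$, and the inductive step reduces to the equation $T_i = C_{i+1} \parsym T_{i+1}$. This in turn follows from a helper lemma stating that Slice is a right-inverse to $\parsym$ whenever the state-variable partition $\{V_C, V_T\}$ respects the conjunctive structure of the specification---i.e., no single conjunctive clause mentions variables from both sides of the partition. The Partition routine guarantees this respect property because $V_C$ is closed under $\textsc{Occurs}_T$, so every clause intersecting $V_C$ lies entirely within $V_C$. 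The helper lemma itself would be proved by unfolding the definition of Slice (Appendix~B) alongside Definition~1, matching the conjuncts of the composed $Next$ action to the slices of each component.

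For (P2), the claim as stated is $\vbeta C_1 \subseteq \vbeta P$. Since $C_1 = \textsc{Slice}(S, V_C)$ and Slice, by inspection, declares exactly the state variables it is passed, we have $\vbeta C_1 = V_C$. The claim therefore reduces to $V_C \subseteq \vbeta P$, where $V_C = \fixpoint(\textsc{Occurs}_S, \vbeta P)$. The natural attack is induction on the number of fixpoint iterations: the base case $V_C = \vbeta P$ is trivial, and the step requires $\textsc{Occurs}_S(V_C) \subseteq \vbeta P$ under the hypothesis $V_C \subseteq \vbeta P$. Unfolding $\textsc{Occurs}_S$, this amounts to showing that every conjunctive clause of every action of $S$ that touches a variable of $P$ mentions only variables of $P$.

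The main obstacle will be this inductive step, since it is not forced by the grammar of Figure~5 alone---the fixpoint is defined precisely to grow $V_C$ when such cross-clause dependencies exist. The plan is to isolate the extra ingredient that makes the step hold. One route is to identify an implicit well-formedness condition on admissible inputs $(S,P)$, e.g.\ that state variables occurring in auxiliary actions never syntactically intermingle with those mentioned in the property. A second route is to argue that Slice performs an additional pruning pass that discards variables added by the Occurs closure, so that the variables actually \emph{declared} by $C_1$ form a subset of $\vbeta P$ even when $V_C$ itself is larger. Settling which of these routes reflects the author's intent---and then propagating it through the fixpoint induction---is the principal technical difficulty of the proof.
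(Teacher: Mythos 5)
Your treatment of (P1) is essentially the paper's: the same loop invariant $S = C_1 \parsym \cdots \parsym C_i \parsym T_i$, with the inductive step reduced to $T_i = C_{i+1} \parsym T_{i+1}$ and justified by the closure of $V_C$ under $\textsc{Occurs}$. The paper's sketch leaves the Slice-versus-$\parsym$ bookkeeping implicit, so your helper lemma (Slice is a right inverse to $\parsym$ on partitions that respect conjunct boundaries) is a reasonable elaboration of the same route.

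For (P2), however, you went down the wrong path. The inclusion in the lemma statement is a typo: everywhere else the paper asserts (P2) in the opposite direction, $\vbeta P \subseteq \vbeta C_1$ --- in the overview of Sec.~\ref{sec:recomp-verify-overview} (``$C_1$ contains all state variables that occur syntactically in $P$''), in the restatement at the start of Sec.~\ref{sec:sym-decomposition}, and in Appendix~\ref{apx:static-reduction}, where the proof of Lem.~\ref{lem:static-reduction} invokes (P2) to conclude $\vbeta P \subseteq \vbeta N$. The downstream requirement that $\proplts(D_P,P)$ be well-formed also needs $D_P$ to contain \emph{at least} the variables of $P$, not at most. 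With the inclusion flipped, the ``principal technical difficulty'' you isolated dissolves into the paper's two-line argument: $\vbeta P \subseteq \fixpoint(\textsc{Occurs}_S,\vbeta P) = V_C$ because the fixpoint is monotonically increasing from $\vbeta P$, and $\vbeta C_1 = V_C$ because Slice declares exactly the variables it is given --- a fact you yourself established. You correctly detected that $V_C \subseteq \vbeta P$ is not provable (it is false whenever the fixpoint strictly grows), but both of your conjectured repairs are wrong: there is no implicit well-formedness restriction on $(S,P)$ --- the $\textsc{Occurs}$ closure exists precisely to handle specifications where property variables intermingle with others --- and Slice performs no pruning pass, as your own observation $\vbeta C_1 = V_C$ already rules out. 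The right move was to notice that the printed statement contradicts its own uses elsewhere in the paper and to prove the reverse inclusion, which is immediate.
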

\begin{proof}
    Sketch.
    We prove property (P1) by establishing the following loop invariant in Alg.~\ref{alg:decompose} on line 5:
    $S = C_1 \parsym \dots \parsym C_i \parsym T_i$.
    The proof for property (P2) follows in two steps.
    First, $\vbeta P \subseteq V_C$ because $V_C$ (in the first partition) is defined by a monotonically increasing operation (\fixpoint) on $\vbeta P$.
    Second, $\vbeta P \subseteq \vbeta C_1$ because $C_1 = \textsc{Slice}(S,V_C)$ will contain exactly the state variables in $V_C$.
\end{proof}

\begin{lemma}
    \label{lem:static-reduction}
    $S \models P$ if and only if $\parsym \left( \bigcup_i X_i \right) \models P$.
\end{lemma}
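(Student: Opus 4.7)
The plan is to partition the components into $Y = \bigcup_i X_i$ (the ``necessary'' ones, which include $C_1$ since $X_0 = \{C_1\}$) and $Z = \{C_1,\dots,C_n\} \setminus Y$, and then argue that the $Z$-components contribute only independent, irrelevant stuttering with respect to $P$. The key structural fact I would establish first is that $\salpha Y \cap \salpha Z = \emptyset$: if some $C \in Y$ and $C' \in Z$ shared an action $a$, then $C \in X_i$ for some $i$ would give $a \in \salpha X_i$, forcing $C' \in X_{i+1} \subseteq Y$, contradicting $C' \in Z$. Note that the chain $X_0 \subseteq X_1 \subseteq \dots$ of subsets of the finite set $\{C_1,\dots,C_n\}$ must stabilize (hence the $n+1$ term bound mentioned in the text), so $Y$ is well-defined.

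Next, I would use associativity and commutativity of $\parsym$ (inherited from $\parexp$ through Theorem~\ref{thm:composition-sem}) together with property (P1) from Lemma~\ref{lem:decompose-correct} to rewrite $S = (\parsym Y) \parsym (\parsym Z)$. This reduces the lemma to proving the following auxiliary claim: for any two \TLA{} specifications $A$ and $B$ with $\salpha A \cap \salpha B = \emptyset$ and $\vbeta P \subseteq \vbeta A$, we have $A \parsym B \models P$ iff $A \models P$. To instantiate the claim I would take $A = \parsym Y$ and $B = \parsym Z$, justifying $\vbeta P \subseteq \vbeta A$ from property (P2) of Lemma~\ref{lem:decompose-correct}, which gives $\vbeta P \subseteq \vbeta C_1 \subseteq \vbeta(\parsym Y)$.

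For the auxiliary claim, the forward direction proceeds by observing that any behavior of $A \parsym B$ restricted to $\vbeta A$ is a behavior of $A$: every transition driven by an action in $\salpha B$ leaves $\vbeta A$ unchanged (by Definition~\ref{def:parallel-comp}, since the action is not in $\salpha A$, it must fall into the interleaving case with $A!vars' = A!vars$) and hence appears as a stuttering step of $A$. Because $\vbeta P \subseteq \vbeta A$, satisfaction of $P$ is determined entirely by the $\vbeta A$-projection. The reverse direction is symmetric: any behavior of $A$ can be extended to a behavior of $A \parsym B$ by choosing an arbitrary initial assignment to $\vbeta B$ and stuttering on $\vbeta B$, and again the $\vbeta P$-projection of the extended behavior agrees with the original.

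The main obstacle is making the projection/extension argument rigorous, in particular handling stuttering and confirming that the interleaving structure is genuinely independent when the symbolic alphabets are disjoint. The cleanest route is to lift everything through Theorem~\ref{thm:composition-sem} to the LTS level, where $\tolts(A \parsym B) = \tolts(A) \parexp \tolts(B)$ becomes pure interleaving under disjoint alphabets; then the projection lemma is a standard fact about action-based LTS behaviors with stuttering.
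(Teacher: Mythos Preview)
Your proposal is correct and follows essentially the same route as the paper: partition the components into necessary ($Y=\bigcup_i X_i$) and unnecessary ($Z$) parts, establish that their symbolic alphabets are disjoint, use (P2) to get $\vbeta P \subseteq \vbeta(\parsym Y)$, and then argue that the $Z$-part cannot affect satisfaction of $P$. The paper carries out the last step via its action--state-based semantics and Lemma~\ref{lem:sem-intersection} ($\abSemantics{N \parsym U} = \abSemantics{N} \cap \abSemantics{U}$), whereas you propose a direct projection/extension argument on behaviors (or the LTS lift via Theorem~\ref{thm:composition-sem}); these are equivalent packagings of the same idea.

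One small correction: your ``forward'' and ``reverse'' labels for the auxiliary claim are swapped. The projection argument (every $A\parsym B$ behavior restricts to an $A$ behavior) establishes $A \models P \Rightarrow A\parsym B \models P$, while the extension argument (every $A$ behavior lifts to an $A\parsym B$ behavior by stuttering on $\vbeta B$) is what gives $A\parsym B \models P \Rightarrow A \models P$. You have both ingredients, just attributed to the wrong directions.
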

\begin{proof}
    We prove this theorem in Appendix~\ref{apx:static-reduction}.
\end{proof}

\begin{lemma}
    \label{lem:comp-verif}
    $\tolts(D_1) \parexp \dots \parexp \tolts(D_m) \models \prop$ if and only if there exists a $k \in \{0 \dots m\}$ such that $\pi \notin Reach(\errlts(D_P,P) \parexp \tolts(D_1) \parexp \dots \parexp \tolts(D_{k}))$.
\end{lemma}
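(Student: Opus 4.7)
The plan is to prove both directions of the biconditional by reducing to the case $k = m$ via monotonicity and then exploiting the standard CRA correspondence between $\errlts$ and $\proplts$. The central lemma I would establish first is the equivalence at $k = m$: $\tolts(D_1) \parexp \dots \parexp \tolts(D_m) \models \prop$ iff $\pi \notin Reach(\errlts(D_P,P) \parexp \tolts(D_1) \parexp \dots \parexp \tolts(D_m))$. This follows from the characterization of $\aSemantics{\proplts(D_P,P)}$ as exactly the behaviors of $\errlts(D_P,P)$ whose witness state sequence never visits $\pi$, combined with the fact that, by the definition of $\parexp$, a behavior of a parallel composition is simultaneously a behavior of each constituent LTS (with stuttering on actions outside its alphabet). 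Concretely, $\pi$ is reachable in the error composition iff some behavior of $\tolts(D_1) \parexp \dots \parexp \tolts(D_m)$ lifts to a $\pi$-visiting behavior of $\errlts(D_P,P)$, iff some behavior of $\tolts(D_1) \parexp \dots \parexp \tolts(D_m)$ fails to lie in $\aSemantics{\prop}$.

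With this equivalence in hand, the $(\Rightarrow)$ direction is immediate: take $k = m$. For the $(\Leftarrow)$ direction, I would invoke a monotonicity lemma for $\parexp$: whenever $(q, q_1, \dots, q_m)$ is reachable in $\errlts(D_P,P) \parexp \tolts(D_1) \parexp \dots \parexp \tolts(D_m)$, its projection $(q, q_1, \dots, q_k)$ is reachable in the shorter composition $\errlts(D_P,P) \parexp \tolts(D_1) \parexp \dots \parexp \tolts(D_k)$. The proof is by induction on the length of a reaching trace: each transition of the full composition either synchronizes on an action used by one of the first $k+1$ LTSs---yielding a matching transition of the shorter composition---or leaves the first $k+1$ coordinates unchanged and can therefore be dropped from the projected trace. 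Taking the contrapositive, $\pi$-unreachability at the $k$-th prefix propagates to $\pi$-unreachability at $k = m$, and then the central equivalence yields $\tolts(D_1) \parexp \dots \parexp \tolts(D_m) \models \prop$.

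The main obstacle I anticipate is the careful bookkeeping around the stuttering-extended action semantics of Sec.~\ref{sec:prelim-lts}: because the alphabets $\alpha \tolts(D_j)$ and $\alpha \errlts(D_P,P)$ generally differ, one must invoke the clause ``$\sigma_i \notin \alpha \mathcal{D}$ implies $q_i = q_{i+1}$'' to (i) lift a behavior of $\tolts(D_1) \parexp \dots \parexp \tolts(D_m)$ that is unsafe in $D_P$ to an actual $\pi$-reaching trace of the error composition, and (ii) show that projecting a reaching trace of the larger composition onto the first $k+1$ coordinates yields a reaching trace of the shorter one. Once these delicate but routine facts about $\parexp$ are in place, both directions of the lemma follow cleanly.
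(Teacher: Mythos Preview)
Your proposal is correct and follows essentially the same approach as the paper's sketch: both directions hinge on the equivalence at $k=m$ between property satisfaction and $\pi$-unreachability, with the forward direction taking $k=m$ and the backward direction using monotonicity of reachability under parallel composition. The paper phrases the backward step slightly differently---noting that none of $\tolts(D_{k+1}),\dots,\tolts(D_m)$ contains a $\pi$ state, so composing them in cannot make $\pi$ reachable---but this is just the same projection/monotonicity fact you spell out, viewed from the other side.
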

\begin{proof}
    Sketch.
    The forwards case ($\Rightarrow$) follows by choosing $k=m$.
    For the backwards case ($\Leftarrow$), we assume that $k$ is given such that $0 \leq k \leq m$ and $\pi \notin Reach(\errlts(D_P,P) \parexp \tolts(D_1) \parexp \dots \parexp \tolts(D_{k}))$.
    Notice that, by construction, none of $\tolts(D_{k+1}), \dots, \tolts(D_m)$ contains a $\pi$ state, and therefore neither will $Reach(\errlts(D_P,P) \parexp \tolts(D_1) \parexp \dots \parexp \tolts(D_{m}))$.
\end{proof}

%Finally, Thm.~\ref{thm:mono-to-comp-verif}--coupled with Lem.~\ref{lem:comp-verif}--shows that Alg.~\ref{alg:recomp-verify} is sound.
\begin{lemma}
    \label{lem:mono-to-comp-verif}
    $S \models P$ $\Longleftrightarrow$ $\tolts(D_1) \parexp \dots \parexp \tolts(D_m) \models \mathcal{P}$.
\end{lemma}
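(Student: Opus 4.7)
The plan is to reduce the statement, via Thm.~\ref{thm:composition-sem}, to a purely LTS-level safety claim that exposes $D_P$ on both sides, and then remove $\tolts(D_P)$ from the left of $\models$ using the structure of the property LTS. First, I would observe that since $f$ is surjective (Def.~\ref{def:recomp-map}), the preimages $f^{-1}(d_P), f^{-1}(d_1), \ldots, f^{-1}(d_m)$ partition $\{C_1, \ldots, C_n\}$. Combined with property (P1) of Lem.~\ref{lem:decompose-correct} and the associativity and commutativity of $\parsym$ (immediate from Def.~\ref{def:parallel-comp}), this gives $S = D_P \parsym D_1 \parsym \cdots \parsym D_m$. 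Iterating Thm.~\ref{thm:composition-sem} then yields $\aSemantics{\tolts(S)} = \aSemantics{\tolts(D_P) \parexp \tolts(D_1) \parexp \cdots \parexp \tolts(D_m)}$.

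Let $\mathcal{L} = \tolts(D_1) \parexp \cdots \parexp \tolts(D_m)$. I would next establish the intermediate equivalence $S \models P \;\Longleftrightarrow\; \tolts(D_P) \parexp \mathcal{L} \models \proplts(D_P, P)$. The left side is equivalent to saying no behavior of $\tolts(S)$ violates $P$, which, using the composition equivalence above together with (P2) of Lem.~\ref{lem:decompose-correct} (so $\vbeta P \subseteq \vbeta D_P$ and hence $P$-violations occur entirely within $D_P$'s variables), is equivalent to $\pi \notin Reach(\errlts(D_P, P) \parexp \tolts(D_1) \parexp \cdots \parexp \tolts(D_m))$. By the standard correspondence between error and property LTSs (steps 2--3 of assumption generation in \cite{Giann:2002}), this in turn is equivalent to $\tolts(D_P) \parexp \mathcal{L} \models \proplts(D_P, P)$.

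It then remains to show $\tolts(D_P) \parexp \mathcal{L} \models \proplts(D_P, P)$ iff $\mathcal{L} \models \proplts(D_P, P)$. The $(\Leftarrow)$ direction is immediate, since $\aSemantics{\tolts(D_P) \parexp \mathcal{L}} \subseteq \aSemantics{\mathcal{L}}$. For $(\Rightarrow)$, I would use the fact that the construction of \cite{Giann:2002} yields a $\proplts(D_P, P)$ that is total over $\alpha D_P$: any behavior of $\mathcal{L}$ whose projection onto $\alpha D_P$ is not a trace of $\tolts(D_P)$ is already accepted by $\proplts(D_P, P)$, so any behavior of $\mathcal{L}$ that could fall outside $\aSemantics{\proplts(D_P, P)}$ must also lie in $\aSemantics{\tolts(D_P)}$, i.e.,~in $\aSemantics{\tolts(D_P) \parexp \mathcal{L}}$. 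I expect this last step to be the main obstacle: it depends on the specific totality of the $\proplts$ construction, and making it precise will require unpacking the semantics from \cite{Giann:2002} rather than treating $\proplts$ as a black box.
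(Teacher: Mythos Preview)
Your overall structure tracks the paper's argument, but there are two issues.

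First, you overlook static reduction (line~3 of Alg.~\ref{alg:recomp-verify}). After \textsc{Static-Reduce}, $f$ may be \emph{partial}: its domain can be a strict subset of $\{C_1,\dots,C_n\}$, so the preimages $f^{-1}(d_P),f^{-1}(d_1),\dots,f^{-1}(d_m)$ partition only the necessary components $\bigcup_i X_i$, not all of $\{C_1,\dots,C_n\}$. Your equation $S = D_P \parsym D_1 \parsym \cdots \parsym D_m$ then fails. The paper handles this by invoking Lem.~\ref{lem:static-reduction} at precisely that step; you need to do the same.

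Second, your last step is a detour you created for yourself. The standard error/property-LTS correspondence (the one you cite from \cite{Giann:2002}) already gives
\[
\pi \notin Reach\bigl(\errlts(D_P,P) \parexp \mathcal{L}\bigr) \ \Longleftrightarrow\ \mathcal{L} \models \proplts(D_P,P)
\]
\emph{directly}, with no $\tolts(D_P)$ on the left: the property LTS $\proplts(D_P,P)$ is built from $\errlts(D_P,P)$ and already encodes the behavior of $D_P$. By stating the correspondence as yielding $\tolts(D_P) \parexp \mathcal{L} \models \proplts(D_P,P)$ instead, you insert a redundant factor that you must then remove via a totality argument. The paper avoids all of this: it stays at the $\pi$-reachability level through the chain $S \to C_1\parsym\cdots\parsym C_n \to D_P\parsym D_1\parsym\cdots\parsym D_m$, uses Thm.~\ref{thm:composition-sem} to split the $\errlts$, replaces $\errlts(D_j,P)$ by $\tolts(D_j)$ for $j\ge 1$ via (P2) and $f(C_1)=d_P$, and only converts to $\models$ in the very last step. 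Your totality argument is not incorrect, but it is unnecessary.
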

\begin{proof}
    \begin{align}
        &S \models P \\
        \Longleftrightarrow \ &\pi \in Reach(\errlts(S,P)) \\
        \Longleftrightarrow \ &\pi \in Reach(\errlts(C_1 \parsym \dots \parsym C_n,P)) \\
        \Longleftrightarrow \ &\pi \in Reach(\errlts(D_P \parsym D_1 \parsym \dots \parsym D_m,P)) \\
        \Longleftrightarrow \ &\pi \in Reach(\errlts(D_P,P) \parexp \dots \parexp \errlts(D_m,P)) \\
        \Longleftrightarrow \ &\pi \in Reach(\errlts(D_P,P) \parexp \dots \parexp \tolts(D_m)) \\
        \Longleftrightarrow \ &\tolts(D_1) \parexp \dots \parexp \tolts(D_m) \models \proplts(D_P,P)
    \end{align}
    Biconditional (2) holds because $P$ is a safety property,
    (3) by Lem.~\ref{lem:decompose-correct} property (P1),
    (4) by the definition for each $D_j$ (and Lem.~\ref{lem:static-reduction} in the case that $f$ is partial),
    (5) by Thm.~\ref{thm:composition-sem},
    (6) by Lem.~\ref{lem:decompose-correct} property (P2) and Def.~\ref{def:recomp-map} ($f(C_1)=d_P$), and
    (7) because $P$ is a safety property.
    Finally, the theorem follows because $\mathcal{P} = \proplts(D_P,P)$.
\end{proof}

\begin{theorem}
    \label{thm:recomp-verif-soundness}
    If Alg~\ref{alg:recomp-verify} terminates, then it returns true (model checking succeeds) if and only if $S \models P$.
\end{theorem}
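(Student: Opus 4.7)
The plan is to chain the four preceding lemmas together, following the structure of Alg.~\ref{alg:recomp-verify} step by step. Since the theorem is conditional on termination, I do not need to argue termination itself; I only need to show that whenever Alg.~\ref{alg:recomp-verify} produces an output, that output agrees with the truth value of $S \models P$.

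First, I would observe that after line 1, Lem.~\ref{lem:decompose-correct} guarantees (P1) and (P2) for the decomposed components $C_1, \dots, C_n$. Lines 2--3 then produce a (possibly partial, after \textsc{Static-Reduce}) recomposition map $f$; by Lem.~\ref{lem:static-reduction} this does not change the verification problem, since removing components outside $\bigcup_i X_i$ preserves $\models P$. Lines 4--6 build the components $D_P, D_1, \dots, D_m$ as parallel compositions of the pre-images of $f$; this is exactly the construction assumed in the hypothesis of Lem.~\ref{lem:mono-to-comp-verif}, and Def.~\ref{def:recomp-map} ensures $C_1$ is placed inside $D_P$ so that $\mathcal{P} = \proplts(D_P,P)$ is well-formed.

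Next, line 7 invokes \textsc{Comp-Verify}, which by inspection of Alg.~\ref{alg:comp-verify} returns \emph{true} precisely when there exists some $k \in \{0,\dots,m\}$ at which $\pi \notin Reach(\errlts(D_P,P) \parexp \tolts(D_1) \parexp \dots \parexp \tolts(D_k))$, and \emph{false} otherwise (note that $k=0$ corresponds to the check on line 2 of Alg.~\ref{alg:comp-verify}, and that intermediate minimization preserves reachability of $\pi$). By Lem.~\ref{lem:comp-verif}, this is equivalent to $\tolts(D_1) \parexp \dots \parexp \tolts(D_m) \models \mathcal{P}$, which by Lem.~\ref{lem:mono-to-comp-verif} is equivalent to $S \models P$. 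Chaining these biconditionals completes the proof.

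The only subtle step is justifying that intermediate minimization with respect to observational equivalence in Alg.~\ref{alg:comp-verify} does not alter whether $\pi$ is reachable in the partial composition; this follows from the standard fact that observational equivalence preserves the set of traces (and hence reachability of distinguished sink states like $\pi$) under parallel composition. Everything else is a direct substitution between equivalent formulations, so I expect the bookkeeping around \textsc{Static-Reduce} (ensuring $f$ being partial still yields a well-defined instance of Lem.~\ref{lem:mono-to-comp-verif}) to be the mildest source of friction rather than a genuine obstacle.
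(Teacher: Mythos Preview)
Your proposal is correct and follows essentially the same approach as the paper: the paper's proof is a one-line appeal to Lem.~\ref{lem:comp-verif} and Lem.~\ref{lem:mono-to-comp-verif}, and your argument simply unpacks this chain in more detail (your references to Lem.~\ref{lem:decompose-correct} and Lem.~\ref{lem:static-reduction} are already absorbed into the proof of Lem.~\ref{lem:mono-to-comp-verif}). Your remark about minimization preserving $\pi$-reachability is a useful clarification that the paper leaves implicit.
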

\begin{proof}
    The result follows from Lem.~\ref{lem:comp-verif} and Lem.~\ref{lem:mono-to-comp-verif}.
    %The result follows from Lemmas \ref{lem:decompose-correct}, \ref{lem:static-reduction}, \ref{lem:comp-verif}, and \ref{lem:mono-to-comp-verif}.
\end{proof}

%\ek{Are you missing a theorem that says Alg. 1 is sound? It might not be obvious to the reader why Thm 3 and Lem 2 together imply the soundness}

While Thm.~\ref{thm:recomp-verif-soundness} shows that Alg.~\ref{alg:recomp-verify} is sound, the algorithm is not complete, even if we limit $S$ to be a finite-state specification.
This is because a given component $D_j$ may be infinite-state, in which case LTS construction will fail to terminate on line 1 or 5 in Alg.~\ref{alg:comp-verify}.
We address this limitation in Sec.~\ref{sec:portfolio-of-strategies} by using a portfolio of strategies that includes the \textit{monolithic strategy}.

\section{Choosing Efficient Recomposition Maps}
\label{sec:parallel-alg}

In this section, we address the problem of designing recomposition strategies, i.e. choosing efficient recomposition maps.
Rather than finding a single recomposition strategy, we propose running Alg.~\ref{alg:recomp-verify} with a portfolio of strategies in parallel.
The primary challenge is determining which strategies to use, since the number of possible recomposition maps grows  large as the number of components increases.
We therefore propose a heuristic for pruning the search space of recomposition maps in Sec.~\ref{sec:heuristic}.
We then choose a small portfolio of strategies based on this heuristic in Sec.~\ref{sec:portfolio-of-strategies}.

\subsection{Recomposition Map Reduction Heuristic}
\label{sec:heuristic}
Any heuristic for pruning the search space of recomposition maps should be tailored to the compositional verification algorithm being used.
Since we use a CRA-style verification algorithm, we design our heuristic to find component orderings that can take advantage of short-circuiting.
In particular, the heuristic identifies recomposition maps that order $D_j$ components that are least likely to be necessary for verification \textit{last}.

Our heuristic is to choose recomposition maps that respect the \textit{data flow} partial order $\preccurlyeq$ over the $C_i$ components.
This is a novel partial order that attempts to find dynamic specification reduction--i.e. short-circuiting--by refining our static specification reduction scheme.
Intuitively, the partial order will order the components based on how far removed their state variables are from impacting verification.

More formally, we compute the data flow partial order based on the indexed sets $X_i$ introduced in Sec.~\ref{sec:static-spec-reduction}.
These sets cumulatively capture the components that may interact--either directly or indirectly--with the actions of $C_1$.
First, we build new indexed sets $E_i$ defined as $E_0 = X_0$ and $E_{i+1} = X_{i+1} \setminus X_i$.
While the $X_i$'s are cumulative, each $E_i$ captures only the additional components in each $X_i$.
Intuitively, the components in $E_i$ are $i$ steps removed from affecting the variables in $C_1$, and hence $i$ steps removed from impacting verification (by property (P2) of decomposition).

Second, we build indexed sets $F_i$ that capture the \textit{data flow} from each component in $E_i$ to the components in $E_{i+1}$.
We define $F_0 = \emptyset$ and:
\begin{align*}
    &F_{i+1} = \left\{(C_j,C_k) \left\vert \genfrac{}{}{0pt}{}{C_j \in E_i \text{ and } C_k \in E_{i+1}}{\text{and } \salpha C_j \cap \salpha C_k \neq \emptyset} \right. \right\}
\end{align*}
Finally, let $F = \bigcup_i F_i$; we define the data flow partial order $\preccurlyeq$ to be the reflexive transitive closure of $F$.
In Appendix~\ref{apx:data-flow-po}, we show formally that the partial order refines the static specification reduction scheme from Sec.~\ref{sec:static-spec-reduction}.

\begin{example}
    \label{ex:data-flow-po}
    %Consider $F$ from the construction of the data flow partial order above.
    %For $TP$ and $Consistent$, $F = \{(RM,Env),(Env,TM_1),(Env,TM_2)\}$ and the data flow partial order for $TP$ is the reflexive transitive closure of this set.
    For $TP$ and $Consistent$, $E_0 = \{RM\}$, $E_1 = \{Env\}$, and $E_2 = \{TM_1,TM_2\}$.
    Moreover, $F = \{(RM,Env),(Env,TM_1),(Env,TM_2)\}$ and the data flow partial order is the reflexive transitive closure of this set.
    Intuitively, the partial order shows that $TM_1$ and $TM_2$ can only affect the variables in $RM$--i.e. the variables $\vbeta Consistent$ needed for verification--\textit{indirectly} by interacting with the $Env$ component.
    %Intuitively, the partial order shows that $TM_1$ and $TM_2$ are ``a step removed'' from affecting the variables in $RM$ (i.e. the variables $\vbeta Consistent$ needed for verification) because $TM_1$ and $TM_2$ can only influence $RM$'s variables \textit{indirectly} by interacting with the $Env$ component.
\end{example}

To further reduce the search space of maps, we extend the data flow partial order to a total order $\totorder{}$, i.e. $\preccurlyeq \subseteq \totorder{}$.
We build the total order by breaking ties between incomparable components $C_i$ and $C_j$ by requiring $C_i \totorder{} C_j$ if and only if $C_i$'s state variables have fewer syntactic appearances than $C_j$'s in the original specification $S$.
In the case that the variables of $C_i$ and $C_j$ have the same number of appearances in $S$, we break the tie arbitrarily.

\subsection{Choosing a Portfolio of Strategies}
\label{sec:portfolio-of-strategies}
In this section, we describe four recomposition strategies that comprise our portfolio.
%In the following, we describe the four strategies assuming that the components $C_1, \dots, C_n$ have been reordered according to the total order $\totorder{}$ described in Sec.~\ref{sec:heuristic}.
For simplicity, we describe the strategies assuming that the components $C_1, \dots, C_n$ have been reordered according to the total order $\totorder{}$ described in Sec.~\ref{sec:heuristic}.
The four strategies are (S1) the identity strategy, in which $m=n-1$ and $f(C_i) = d_i$ for all $i$,
(S2) a ``bottom heavy'' strategy in which we choose $m=1$ and $f$ such that $f(C_1) = d_P$ and $f(C_i) = d_1$ for all $i > 1$,
(S3) a ``top heavy'' strategy in which we choose $m=1$ and $f$ such that $f(C_n) = d_1$ and $f(C_i) = d_P$ for all $i < n$, and
(S4) the \textit{monolithic} strategy, where $m=0$ and $f(C_i) = d_P$ for all $i$.
%\ek{The bottom and top heavy strategies seem ``extreme'' in that it groups all the components except one into a single D; is there a reason why these strategies were designed this way? }

\begin{example}
    \label{ex:strategy-portfolio}
    In $TP$, the state variables of $TM_2$ occur fewer times than the variables of $TM_1$.
    Therefore, the total ordering from Sec.~\ref{sec:heuristic} is: $RM,Env,TM_2,TM_1$.
    %Then,
    %(S1) in the identity strategy: $m=3$, $f(RM) = d_P$, $f(Env) = d_1$, $f(TM_2) = d_2$, and $f(TM_1) = d_3$,
    %(S2) the bottom heavy strategy: $m=1$, $f(RM) = d_P$, and $f(Env) = f(TM_2) = f(TM_1) = d_1$,
    %(S3) the top heavy strategy: $m=1$, $f(RM) = f(Env) = f(TM_2) = d_P$, and $f(TM_1) = d_1$, and
    %(S4) the monolithic strategy: $m=0$ and $f(RM) = f(Env) = f(TM_2) = f(TM_1) = d_P$.
    Then:
    \begin{enumerate}
        \item In the identity strategy, $m=3$, $f(RM) = d_P$, $f(Env) = d_1$, $f(TM_2) = d_2$, and $f(TM_1) = d_3$.
        \item In the bottom heavy strategy, $m=1$, $f(RM) = d_P$, and $f(Env) = f(TM_2) = f(TM_1) = d_1$.
        \item In the top heavy strategy, $m=1$, $f(RM) = f(Env) = f(TM_2) = d_P$, and $f(TM_1) = d_1$.
        \item In the monolithic strategy, $m=0$ and $f(RM) = f(Env) = f(TM_2) = f(TM_1) = d_P$.
    \end{enumerate}
\end{example}

As a note regarding the correctness analysis from Sec.~\ref{sec:correctness-analysis}, including the monolithic strategy in the portfolio ensures termination.
Therefore, our parallel approach is complete for finite state specifications $S$.
\section{Experimental Results}
\label{sec:results}

\subsection{Implementation}
We have created a model checker called Recomp-Verify that can verify safety for \TLA{} specifications.
The model checker is a prototype research tool that implements Alg.~\ref{alg:recomp-verify} in the Python, Java, and Kotlin programming languages.
The model checker also supports running multiple instances of Alg.~\ref{alg:recomp-verify} in parallel, and returns the first result to finish.
Our tool is available in a public repository \cite{Dardik:2024rv}.

\begin{figure*}
    \centering
    \footnotesize
    \rowcolors{2}{gray!10}{white}
\begin{tabular}{l|rrrrr|rr|rrr|rr}
    \multicolumn{0}{c}{} & \multicolumn{5}{c}{Recomp-Verify$^1$} & \multicolumn{2}{c}{TLC$^1$} & \multicolumn{3}{c}{Recomp-Verify$^4$} & \multicolumn{2}{c}{TLC$^4$} \\ \hline
    Name & n & m & k & States & Time & States & Time & States & Time & Strat. & States & Time \\ \hline
    tla-consensus-3
        & 1 & 0 & 0 & 4 & 1s
        & 4 & 1s
        & 4 & 1s & S4
        & 4 & 1s \\
    tla-tcommit-3
        & 1 & 0 & 0 & 34 & 1s
        & 34 & 1s
        & 34 & 1s & S4
        & 34 & 1s \\
    i4-lock-server-2-2
        & 1 & 0 & 0 & 9 & 1s
        & 9 & 1s
        & 9 & 1s & S4
        & 9 & 1s \\
    ex-quorum-leader-election-6
        & 2 & 1 & 1 & 117,671 & 39s
        & 121,111 & \textbf{4s}
        & 121,111 & 5s & S4
        & 121,111 & \textbf{2s} \\
    pyv-toy-consensus-forall-6-6
        & 3 & 1 & 1 & 117,671 & 33s
        & 121,111 & \textbf{4s}
        & 121,111 & 5s & S4
        & 121,111 & \textbf{2s} \\
    tla-simple-5
        & 1 & 0 & 0 & 723 & 1s
        & 723 & 1s
        & 723 & 1s & S4
        & 723 & 1s \\
    ex-lockserv-automaton-20
        & 5 & 1 & 0 & 61 & \textbf{2s}
        & - & TO
        & 61 & \textbf{3s} & S3
        & - & TO \\
    tla-simpleregular-5
        & 1 & 0 & 0 & 2,524 & 2s
        & 2,524 & \textbf{1s}
        & 2,524 & 1s & S4
        & 2,524 & 1s \\
   pyv-sharded-kv-3-3-3
        & 3 & 0 & 0 & 10,648 & 5s
        & 10,648 & \textbf{2s}
        & 10,648 & 2s & S4
        & 10,648 & \textbf{1s} \\
    pyv-lockserv-20
        & 5 & 1 & 0 & 61 & \textbf{1s}
        & - & TO
        & 61 & \textbf{2s} & S3
        & - & TO \\
    tla-twophase-9
        & 4 & 2 & 2 & 145,176 & \textbf{19s}
        & 10,340,352 & 9m41s
        & 145,691 & \textbf{31s} & S1
        & 10,340,352 & 2m36s \\
    tla-twophase-10
        & 4 & 2 & 2 & 481,550 & \textbf{1m8s}
        & - & TO 
        & 482,577 & \textbf{1m36s} & S1
        & - & TO \\
    tla-twophase-counter-9
        & 5 & 2 & 2 & 145,176 & \textbf{19s}
        & - & TO
        & 145,691 & \textbf{31s} & S1
        & - & TO \\
    i4-learning-switch-4-3
        & 1 & 0 & - & - & TO
        & 1,344,192 & \textbf{5m55s}
        & 1,344,192 & 5m55s & S4
        & 1,344,192 & \textbf{1m37s} \\
    ex-simple-decentralized-lock-4
        & 2 & 0 & 0 & 20 & 2s
        & 20 & \textbf{1s}
        & 20 & 1s & S4
        & 20 & 1s \\
    i4-two-phase-commit-7
        & 4 & 2 & 2 & 151,348 & \textbf{26s}
        & 10,016,384 & 3m38s
        & 184,112 & \textbf{27s} & S3
        & 10,016,384 & 53s \\
    pyv-consensus-wo-decide-4
        & 5 & 2 & 1 & 32,816 & \textbf{9s}
        & - & TO
        & 32,953 & \textbf{10s} & S3
        & - & TO \\
    pyv-consensus-forall-4-4
        & 6 & 1 & 0 & 33,545 & \textbf{8s}
        & - & TO
        & 33,545 & \textbf{9s} & S3
        & - & TO \\
    pyv-learning-switch-trans-3
        & 2 & 1 & 0 & 729 & \textbf{5s}
        & - & TO
        & 729 & \textbf{6s} & S1
        & - & TO \\
    pyv-learning-switch-sym-2
        & 2 & 1 & 0 & 4 & 2s
        & 1,344 & \textbf{1s}
        & 1,344 & 1s & S4
        & 1,344 & 1s \\
    pyv-sharded-kv-no-lost-keys-3-3-3
        & 2 & 0 & 0 & 9,261 & 4s
        & 27 & \textbf{1s}
        & 9,261 & 2s & S4
        & 9,261 & \textbf{1s} \\
    ex-naive-consensus-4-4
        & 3 & 1 & 1 & 824 & 2s
        & 1,001 & \textbf{1s}
        & 1,001 & 2s & S4
        & 1,001 & \textbf{1s} \\
    pyv-client-server-ae-4-2-2
        & 2 & 1 & 1 & 352,145 & \textbf{42s}
        & 2,039,392 & 1m36s
        & 352,145 & 49s & S1
        & 2,039,392 & \textbf{28s} \\
    pyv-client-server-ae-2-4-2
        & 2 & 1 & 1 & 894,437 & 2m18s
        & 2,387,032 & \textbf{1m16s}
        & 2,387,032 & 1m26s & S4
        & 2,387,032 & \textbf{22s} \\
    ex-simple-election-6-7
        & 3 & 1 & 0 & 267,590 & \textbf{1m20s}
        & 2,900,256 & 3m7s
        & 267,590 & 1m22s & S3
        & 2,900,256 & \textbf{54s} \\
    pyv-toy-consensus-epr-8-3
        & 3 & 1 & 1 & 65,543 & 1m1s
        & 70,903 & \textbf{6s}
        & 70,903 & 7s & S4
        & 70,903 & \textbf{2s} \\
    ex-toy-consensus-8-3
        & 2 & 1 & 1 & 65,543 & 57s
        & 70,903 & \textbf{5s}
        & 70,903 & 6s & S4
        & 70,903 & \textbf{2s} \\
    pyv-client-server-db-ae-2-3-2
        & 5 & 4 & 4 & 188,158 & \textbf{12s}
        & 1,394,368 & 1m1s
        & 188,799 & \textbf{15s} & S1
        & 1,394,368 & 18s \\
    pyv-client-server-db-ae-4-2-2
        & 5 & 1 & 1 & 356,706 & \textbf{1m23s}
        & 3,624,960 & 2m48s
        & 356,706 & 1m40s & S1
        & 3,624,960 & \textbf{44s} \\
    pyv-firewall-5
        & 2 & 0 & 0 & 56,072 & 9s
        & 56,072 & \textbf{2s}
        & 56,072 & 3s & S4
        & 56,072 & \textbf{1s} \\
    ex-majorityset-leader-election-5
        & 3 & 1 & - & - & TO
        & 166,306 & \textbf{15s}
        & 166,306 & 17s & S4
        & 166,306 & \textbf{5s} \\
    pyv-consensus-epr-4-4
        & 6 & 2 & 1 & 7,018 & \textbf{3s}
        & - & TO
        & 7,221 & \textbf{5s} & S3
        & - & TO \\
    mldr-2
        & 1 & 0 & - & - & TO
        & - & TO
        & - & TO & -
        & - & TO
\end{tabular}
    \caption{Run time comparison between Recomp-Verify and TLC. %for verifying safety properties in distributed protocols.
        The superscripts for each tool indicates how many threads are allocated to a trial.
        The fastest times for each experiment are bolded.
        The ``Strat.'' column denotes the fastest strategy.}
    \label{fig:verif-results}
\end{figure*}

\subsection{Experiments}
\label{sec:results-experiments}
We evaluate Recomp-Verify against TLC on a benchmark of distributed protocols \cite{schultz:2022}, plus the tla-twophase-counter protocol that we introduce in Ex.~\ref{ex:tp-counter}.
Our evaluation is driven by two research questions.
First, (\textbf{RQ1}) can hand-written recomposition maps provide more efficient verification than TLC?
If this is the case, we then ask whether our technique is still performant when automating the search for recomposition maps.
More precisely, (\textbf{RQ2}) is the performance of Recomp-Verify (using a parallel, portfolio strategy) competitive with TLC when each tool is allotted four threads?

In our experiments, we use TLC$^1$ and TLC$^4$ to respectively denote TLC run with one and four parallel threads; this is a built-in option for the tool.
Recomp-Verify$^1$ is the version of our tool with one thread and hand-crafted maps, while Recomp-Verify$^4$ denotes the version that uses four threads to run the portfolio of recomposition strategies (S1-4) from Sec.~\ref{sec:parallel-alg} in parallel.
We report the fastest strategy for Recomp-Verify$^4$ in the ``Strat.'' column in Fig.~\ref{fig:verif-results}.
Additionally, in the implementation of Recomp-Verify$^4$, we use TLC$^1$ for running the monolithic strategy (S4), since TLC is far more efficient than our research prototype for monolithic model checking.
For example, in ex-quorum-leader-election-6 in Fig.~\ref{fig:verif-results}, Recomp-Verify$^1$ uses an optimal single-threaded strategy, yet is slower than TLC$^1$--and therefore Recomp-Verify$^4$ too.
%For example, in ex-quorum-leader-election-6 in Fig.~\ref{fig:verif-results}, Recomp-Verify$^4$ completes faster than Recomp-Verify$^1$ because the monolithic strategy (S4) 

Every experiment in this paper was run on an Apple MacBook Pro with 32GB of memory and an M1 processor.
For each benchmark, we report the total run time using the Unix \textit{time} utility as well as the maximum number of states checked.
We use TO to indicate a timeout after ten minutes and OM to indicate a program crash due to reaching the memory limit given a 25GB allotment.
For TLC's maximum state count, we use the number of unique states that the tool reports.
For Recomp-Verify, we use the maximum between
(1) the number of unique states generated for each component and,
for each iteration, (2) the number of states that results from composition in Alg.~\ref{alg:comp-verify} on line 5.
For Recomp-Verify$^1$, we also report the number of components that result from decomposition ($n$), the number of recomposed components ($m$), and the number of recomposed components that were checked ($k$).

\subsection{Results and Discussion}
\label{sec:results-discussion}
\subsubsection{RQ1}
We show our results in Fig.~\ref{fig:verif-results}.
In terms of state space, TLC$^1$ enumerates at least as many states as Recomp-Verify$^1$ in every case.
For six of the benchmarks that both tools verified, recomposition reduced the state size by \textit{millions} of states.
Moreover, Recomp-Verify$^1$ short-circuits ($k<m$) for eight benchmarks, each of which has a significantly smaller state space than TLC$^1$.
Finally, Recomp-Verify$^1$--but not TLC$^1$--was able to verify the one infinite state benchmark (tla-twophase-counter) via static specification reduction.

In terms of verification speed, Recomp-Verify$^1$ and TLC$^1$ both outperform each other in fourteen benchmarks, and tie in five cases.
However, Recomp-Verify$^1$ completes more benchmarks, verifying twenty-nine benchmarks while TLC$^1$ verifies twenty-four.
Generally speaking, Recomp-Verify$^1$ is more performant on larger benchmarks; on benchmarks with over a million states, TLC$^1$ is faster in two cases while Recomp-Verify$^1$ is faster in at least six cases.
We therefore answer RQ1 by concluding that hand-crafted maps \textit{can} provide more efficiency than TLC.

\subsubsection{RQ2}
The results for TLC$^4$ and Recomp-Verify$^4$ are similar to the single threaded versions.
In terms of state space, Recomp-Verify$^4$ always enumerates the same number of (or fewer) states than TLC$^4$, and also exhibits large savings in the millions for six benchmarks.
We note that Recomp-Verify$^4$ short-circuited every time that Recomp-Verify$^1$ short-circuited, which showcases the effectiveness of the data flow heuristic from Sec.~\ref{sec:heuristic}.

In terms of verification speed, Recomp-Verify$^4$ is faster for eleven benchmarks, TLC$^4$ is faster for fourteen, and the tools tie for eight benchmarks.
However, Recomp-Verify$^4$ verifies more benchmarks, completing thirty-two, while TLC$^4$ completes twenty-four.
While threading made TLC faster for the smaller benchmarks, it did not help the tool verify more benchmarks.
On the other hand, threading helped Recomp-Verify to verify three more benchmarks. %, two of which have a state space over a million.
Generally speaking, Recomp-Verify$^4$ outperforms TLC$^4$ for large benchmarks and is competitive with TLC$^4$ for smaller ones, and therefore we answer RQ2 in the affirmative.

\subsubsection{Discussion}
In Fig.~\ref{fig:verif-results}, the Strat. column shows that the winning strategy for Recomp-Verify$^4$ varies depending on the given benchmark.
This observation suggests that using a portfolio of strategies may be necessary for efficient verification.
%Furthermore, we observe that the majority of benchmarks (6 out of 8) that short-circuited used the top heavy strategy S3.
%For these cases, program slicing alone would have been sufficient for efficient verification; however, recopmosition is necessary for the two corner cases.
Notably, among the benchmark problems, we found that the bottom heavy strategy (S2) did not perform well.
Most likely, this is because the second component in this strategy is too large, and therefore misses opportunities for short-circuiting.
%Furthermore, we observe that the bottom heavy strategy (S2) does not win once, and therefore could potentially be replaced with a better strategy.

Ultimately, Recomp-Verify tends to be faster than TLC on benchmarks that have more opportunity for recomposition.
We point out that Recomp-Verify$^1$ is faster than TLC$^1$ in \textit{every} case where decomposition produced at least four components ($n \geq 4$).
The same is true for Recomp-Verify$^4$ and TLC$^4$ in all but one case.
This observation suggests that the potential benefits of recomposition increase with the number of available components ($n$).

\section{Related Work}
Compositional verification is a well studied research area.
Two widely studied styles of compositional verification are CRA \cite{Cheung:1995,Cheung:1996,Cheung:1999,Graf:1990,Malhotra:1990,Sabnani:1989,Tai:1993,Zheng:2010,Zheng:2014,Zheng:2015} and assume-guarantee reasoning \cite{alur:2005,Chen:2009,Cobleigh:2003,Cobleigh:2006,Gupta:2007,Jones:1983,Metzler:2008,Nam:2008,Pasareanu:2008,Pasareanu:2006,Pnueli:1985}, although other styles exist as well \cite{Barringer:2003,Bensalem:2008}.
Of these works, the ones most closely related to this paper automate decomposition for verification.
Metzler et a. \cite{Metzler:2008} and Cobleigh et al. \cite{Cobleigh:2006} decompose systems into two components, after which they apply $\text{L}^*$ style learning \cite{Angluin:1987} to infer assumptions for assume-guarantee style compositional verification.
Nam et al. \cite{Nam:2008} use a similar strategy, but consider multi-way decomposition and verification.
While these works report limited success, we are able to find efficient verification problems via recomposition.

Our work also relates to program slicing \cite{Bruckner:2005,Weiser:1982} and cone of influence reduction \cite{Clarke:2018model}, both of which are techniques for static specification reduction.
These two techniques soundly reduce the state variables needed for model checking by analyzing a variable dependency graph.
Our work includes static specification reduction by allowing partial recomposition maps, as described in Sec.~\ref{sec:algorithm}.

In the \TLA{} ecosystem, TLC \cite{Yu:1999} is the most well-known model checker.
Apalache \cite{Konnov:2019,Otoni:2023} is an alternate model checker that internally relies on SMT solvers.
Apalache supports bounded model checking and verification with inductive invariants--two techniques that are outside the scope of comparison for our tool.
The \TLA{} Proof System (TLAPS) \cite{cousineau:2012tla} provides an alternative to model checking \TLA{}.
TLAPS proofs are manually constructed, but automatically verified by dispatching proof obligations to SMT solvers.
Endive \cite{schultz:2022} is a tool that automatically infers inductive invariants for \TLA{} specifications, which then may be checked using a TLAPS proof.

\section{Limitations and Future Work}
In Sec.~\ref{sec:results-discussion}, we show that the effectiveness of our approach is tied to the number of $C_i$ components.
In future work, we plan to investigate methods to make decomposition more granular, as well as decomposing \textit{properties}.
As the number of components increase, we also plan to improve our methods for finding efficient recomposition maps.
For example, we plan to improve our parallel technique so that different threads can share intermediate work to save time and memory. %; this may allow us to explore even more recomposition maps at once.

In this paper, we focus on using recomposition for explicit-state model checking. However, recomposition may also be effective for other compositional verification tasks.
For example, we plan to investigate whether recomposition can be used in combination with non-explicit verification techniques, e.g. using SMT solvers or inductive invariants.
We also plan to investigate whether recomposition can be used for efficient counter-example detection.

\section*{Acknowledgements}
The authors are most grateful to Changjian Zhang, Andy Hammer, and the anonymous reviewers for their helpful comments on earlier versions of this paper.
This project was supported by the U.S. NSF Award \#2144860.

\bibliographystyle{IEEEtran}
\bibliography{IEEEabrv,sources}

\appendices
\section{Proof of Thm.~\ref{thm:composition-sem}}
\label{apx:composition-sem-proof}

Before proving Thm~\ref{thm:composition-sem}, we first introduce more general semantics for \TLA{} specifications, namely action-state-based semantics.
We define an action-state-based behavior $\sigma$ as an infinite sequence of action-state pairs: $(\epsilon,b_0)(a_1,b_1)(a_2,b_2)\dots$, where $\epsilon$ is the empty action.
We use the notation $\sigma_i$ to access the $i$th pair of $\sigma$.
Notice that we require the initial action to be empty, hence $\sigma_0 = (\epsilon,b_0)$.
Let $\phi$ be an arbitrary \TLA{} formula, let $\psi$ be a \TLA{} formula without temporal operators, and let $Next$ be the transition relation for a specification $S$, i.e. $Next$ is a \TLA{} formula with prime operators.
We now define the action-state-based semantics of a behavior $\sigma$ on a \TLA{} formula in terms of the action-state-based satisfaction operator $\abmodels$.
\begin{align*}
    &\sigma \abmodels \psi &\text{iff}& \quad\quad b_0 \models \psi \\
    &\sigma \abmodels \Box\phi &\text{iff}& \quad\quad \forall i \in \mathbb{N} \st \sigma_i \sigma_{i+1} \dots \abmodels \phi \\
    &\sigma \abmodels \Diamond\phi &\text{iff}& \quad\quad \exists i \in \mathbb{N} \st \sigma_i \sigma_{i+1} \dots \abmodels \phi \\
    &\sigma \abmodels [Next]_v &\text{iff}& \quad\quad \big(a_1 \in \calpha S \land b_1' \models (b_0 \land a_1) \big) \ \lor \\
    &&& \quad\quad \big(a_1 \notin \calpha S \land b_1' \models (b_0 \land v' = v) \big)
\end{align*}
Note that the semantics for $\sigma \abmodels [Next]_v$ implicitly uses $Next$ because $\calpha S$ is defined by $Next$.
We now define the action-state-based semantics of a \TLA{} specification $S$ as the set of all action-state-based behaviors that satisfy $S!Spec$.
More formally, we define the action-state-based semantics of $S$ to be: $\abSemantics{S} = \{ \sigma \mid \sigma \abmodels S!Spec \}$.
We also define $actions(\sigma)$ to denote the action-based behavior $a_1 a_2 \dots$.
We overload this operator on a set of action-state-based behaviors $A$ as follows: $actions(A) = \{ actions(\sigma) \mid \sigma \in A \}$.

\begin{lemma}
    \label{lem:sem-intersection}
    $\abSemantics{S \parsym T} = \abSemantics{S} \cap \abSemantics{T}$
\end{lemma}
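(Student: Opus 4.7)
The plan is to fix an arbitrary action-state-based behavior $\sigma = (\epsilon, b_0)(a_1, b_1)(a_2, b_2)\dots$ and show that $\sigma \abmodels (S \parsym T)!Spec$ iff $\sigma \abmodels S!Spec$ and $\sigma \abmodels T!Spec$. Since every specification produced by our grammar has the form $Spec \triangleq Init \land \Box[Next]_{vars}$, the biconditional splits into an initial-condition part and a safety part, each of which I would prove and then conjoin.

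For the initial condition, Def.~\ref{def:parallel-comp} directly sets $(S \parsym T)!Init$ to $S!Init \land T!Init$, so the equivalence $b_0 \models (S \parsym T)!Init$ iff $b_0 \models S!Init$ and $b_0 \models T!Init$ is immediate from the definition of $\models$ on conjunction.

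For the safety part, since $\sigma \abmodels \Box\phi$ unfolds to a universal quantification over suffixes, it suffices to prove, at each index $i$, that $\sigma_i \sigma_{i+1}\dots \abmodels [(S \parsym T)!Next]_{(S \parsym T)!vars}$ iff both $\sigma_i \sigma_{i+1}\dots \abmodels [S!Next]_{S!vars}$ and $\sigma_i \sigma_{i+1}\dots \abmodels [T!Next]_{T!vars}$. I would establish this by case analysis on $a_{i+1}$ relative to the alphabets. In the synchronized case $a_{i+1} \in \calpha S \cap \calpha T$, the first clause of Def.~\ref{def:parallel-comp}'s $Next$ yields $S!A(d) \land T!A(d)$, which by conjunction-splitting of $\models$ matches both components taking the action simultaneously. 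In the two interleaved cases $a_{i+1} \in \calpha S \setminus \calpha T$ and $a_{i+1} \in \calpha T \setminus \calpha S$, the remaining clauses force one component to execute $A(d)$ while the other's variables are held fixed, which is exactly what the $[\cdot]_v$ bracket of the passive component requires under its stuttering branch. Finally, if $a_{i+1} \notin \calpha S \cup \calpha T$, then $a_{i+1} \notin \calpha(S \parsym T)$ too, so all three satisfaction statements collapse to their stuttering branches, and the concatenation $(S \parsym T)!vars = S!vars \circ T!vars$ together with the disjointness of state variables makes $(S \parsym T)!vars' = (S \parsym T)!vars$ equivalent to $S!vars' = S!vars \land T!vars' = T!vars$.

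The step I expect to be the main obstacle is aligning the two different ways stuttering appears: as a syntactic conjunct ``$T!vars' = T!vars$'' inside Def.~\ref{def:parallel-comp}'s $Next$ when $a_{i+1}$ is exclusive to $\calpha S$, versus as the semantic stuttering branch of $\abmodels [T!Next]_{T!vars}$ that the $T$-side triggers precisely because $a_{i+1} \notin \calpha T$. The distinct-variables hypothesis from Def.~\ref{def:parallel-comp} is what makes these two forms coincide: it lets the primed constraint on $S$-variables and the unchanged constraint on $T$-variables be analyzed independently, after which the single-step biconditional--and hence the lemma--follows.
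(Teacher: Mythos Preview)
Your proposal is correct and follows essentially the same approach as the paper: both split into the $Init$ part (handled directly from Def.~\ref{def:parallel-comp}) and the $\Box[Next]_{vars}$ part, and both resolve the latter by a four-way case analysis on whether $a_{i+1}$ lies in $\calpha S$, $\calpha T$, both, or neither, using the disjointness of state variables to reconcile the syntactic unchanged-conjunct with the semantic stuttering branch. The only cosmetic difference is that the paper proves the two inclusions separately whereas you argue the step-level biconditional directly.
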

\begin{proof}
    We will first show that $\abSemantics{S \parsym T} \subseteq \abSemantics{S} \cap \abSemantics{T}$.
    Suppose that $\sigma \in \abSemantics{S \parsym T}$, then it suffices to show that $\sigma \abmodels S!Spec$ and $\sigma \abmodels T!Spec$.
    %because this implies $\sigma \in \abSemantics{S} \cap \abSemantics{T}$.
    Recall that $Spec$ has the form $Init \land \Box[Next]_{vars}$.
    Hence, we break the proof obligation into two tasks, one for each conjunct.
    The first task is to show that $\sigma \abmodels S!Init$ and $\sigma \abmodels T!Init$; this fact follows because $\sigma \abmodels (S!Init \land T!Init)$ by the definition of $\parsym$.
    The second task is to show that $\sigma \abmodels \Box[S!Next]_{vars}$ and $\sigma \abmodels \Box[T!Next]_{vars}$.
    By the definition of the action-state-based semantics for \TLA{}, this is equivalent to showing (for just $S$):
    \begin{align}
    \label{rule:sem-intersect-1}
    \begin{split}
        \forall i \in \mathbb{N} \st
        &\lor \big( a_{i+1} \in \calpha S \land b_{i+1}' \models (b_i \land S!a_{i+1}) \big) \\
        &\lor \big( a_{i+1} \notin \calpha S \land b_{i+1}' \models (b_i \land S!vars' = S!vars) \big)
        %&\forall i \in \mathbb{N} \st \big(a_{i+1} \in \calpha S \land b_{i+1}' \models (b_i \land a_{i+1}) \big) \lor \big(a_{i+1} \notin \calpha S \land b_{i+1}' \models (b_i \land S!v' = S!v) \big)
    \end{split}
    \end{align}
    We will prove equation (\ref{rule:sem-intersect-1})--for both $S$ and $T$--by considering the following four cases for an arbitrary choice of $i$:
    \begin{enumerate}
        \item Case 1: $a_{i+1} \in \calpha S$ and $a_{i+1} \in \calpha T$.
        
            In this case, the symbolic action for $a_{i+1}$ must be in $\salpha S$ and $\salpha T$, and therefore $(S \parsym T)!a_{i+1}$ is equal to $S!a_{i+1} \land T!a_{i+1}$.
            Thus we see that $b_{i+1}' \models b_i \land S!a_{i+1} \land T!a_{i+1}$ by the action-state-based semantics of \TLA{}.
            This implies equation (\ref{rule:sem-intersect-1}) for the base case for both $S$ and $T$.
            
        \item Case 2: $a_{i+1} \in \calpha S$ and $a_{i+1} \notin \calpha T$.
        
            In this case, the symbolic action for $a_{i+1}$ must be in $\salpha S$ but not in $\salpha T$.
            Therefore, $(S \parsym T)!a_{i+1}$ is equal to $S!a_{i+1} \land (T!vars' = T!vars)$.
            Thus we see that $b_{i+1}' \models b_i \land S!a_{i+1} \land (T!vars' = T!vars)$ by the action-state-based semantics of \TLA{}.
            This implies equation (\ref{rule:sem-intersect-1}) for the base case for both $S$ and $T$.
            
        \item Case 3: $a_{i+1} \notin \calpha S$ and $a_{i+1} \in \calpha T$.

            This case is identical to Case 2, except with $S$ and $T$ flipped.
            
        \item Case 4: $a_{i+1} \notin \calpha S$ and $a_{i+1} \notin \calpha T$.
        
            In this case, the symbolic action for $a_{i+1}$ must not be in $\salpha S$ nor $\salpha T$.
            Therefore, $(S \parsym T)!a_{i+1}$ is equal to $(S!vars' = S!vars) \land (T!vars' = T!vars)$.
            Thus we see that $b_{i+1}' \models b_i \land (S!vars' = S!vars) \land (T!vars' = T!vars)$ by the action-state-based semantics of \TLA{}.
            This implies equation (\ref{rule:sem-intersect-1}) for the base case for both $S$ and $T$.
    \end{enumerate}
    
    Conversely, we will now show that $\abSemantics{S} \cap \abSemantics{T} \subseteq \abSemantics{S \parsym T}$.
    Suppose that $\sigma \in \abSemantics{S} \cap \abSemantics{T}$, then it suffices to show that $\sigma \abmodels (S \parsym T)!Spec$.
    Recall that $Spec$ has the form $Init \land \Box[Next]_{vars}$.
    Hence, we break the proof obligation into two tasks, one for each conjunct.
    The first task is to show $\sigma \abmodels (S \parsym T)!Init$, or equivalently $\sigma \abmodels S!Init \land T!Init$.
    Because $\sigma \abmodels S!Spec$ and $\sigma \abmodels T!Spec$, we see that $\sigma \abmodels S!Init$ and $\sigma \abmodels T!Init$; thus the first task is proved.
    The second task is to show that $\sigma \abmodels \Box[(S \parsym T)!Next]_{vars}$.
    By the definition of the action-state-based semantics for \TLA{}, this is equivalent to showing:
    \begin{align}
    \label{rule:sem-intersect-2}
    \begin{split}
        &\forall i \in \mathbb{N} \st \\
        &\lor \big( a_{i+1} \in \calpha (S \parsym T) \land b_{i+1}' \models (b_i \land (S \parsym T)!a_{i+1}) \big) \\
        &\lor \big( a_{i+1} \notin \calpha (S \parsym T) \land b_{i+1}' \models \\
        &\quad\quad\quad (b_i \land (S \parsym T)!vars' = (S \parsym T)!vars) \big)
    \end{split}
    \end{align}
    We will prove equation (\ref{rule:sem-intersect-2}) by considering the following four cases for an arbitrary choice of $i$:
    \begin{enumerate}
        \item Case 1: $a_{i+1} \in \calpha S$ and $a_{i+1} \in \calpha T$.
        
            In this case, $a_{i+1} \in \calpha (S \parsym T)$.
            Next, notice that the symbolic action for $a_{i+1}$ must be in $\salpha S$ and $\salpha T$, and therefore $(S \parsym T)!a_{i+1}$ is equal to $S!a_{i+1} \land T!a_{i+1}$.
            Furthermore, $b_{i+1}' \models b_i \land S!a_{i+1}$ and $b_{i+1}' \models b_i \land T!a_{i+1}$.
            Therefore we conclude $b_{i+1}' \models b_i \land S!a_{i+1} \land T!a_{i+1}$ which implies equation (\ref{rule:sem-intersect-2}) for this case.
            
        \item Case 2: $a_{i+1} \in \calpha S$ and $a_{i+1} \notin \calpha T$.
        
            In this case, $a_{i+1} \in \calpha (S \parsym T)$.
            Next, notice that the symbolic action for $a_{i+1}$ must be in $\salpha S$ but not in $\salpha T$.
            Therefore, $(S \parsym T)!a_{i+1}$ is equal to $S!a_{i+1} \land (T!vars' = T!vars)$.
            Furthermore, $b_{i+1}' \models b_i \land S!a_{i+1}$ and $b_{i+1}' \models b_i \land (T!vars' = T!vars)$.
            Therefore we conclude $b_{i+1}' \models b_i \land S!a_{i+1} \land (T!vars' = T!vars)$ which implies equation (\ref{rule:sem-intersect-2}) for this case.
            
        \item Case 3: $a_{i+1} \notin \calpha S$ and $a_{i+1} \in \calpha T$.

            This case is identical to Case 2, except with $S$ and $T$ flipped.
            
        \item Case 4: $a_{i+1} \notin \calpha S$ and $a_{i+1} \notin \calpha T$.
        
            In this case, $a_{i+1} \notin \calpha (S \parsym T)$.
            Next, notice that the symbolic action for $a_{i+1}$ must not be in $\salpha S$ nor $\salpha T$.
            Therefore, $(S \parsym T)!a_{i+1}$ is equal to $(S!vars' = S!vars) \land (T!vars' = T!vars)$.
            Furthermore, $b_{i+1}' \models b_i \land (S!vars' = S!vars)$ and $b_{i+1}' \models b_i \land (T!vars' = T!vars)$.
            Therefore we conclude $b_{i+1}' \models b_i \land ((S!vars \circ T!vars)' = (S!vars \circ T!vars))$.
            Since $(S \parsym T)!vars = (S!vars \circ T!vars)$, this implies equation (\ref{rule:sem-intersect-2}) for this case.
    \end{enumerate}
\end{proof}
Lem.~\ref{lem:sem-intersection} shows that our notion of parallel composition is semantically an intersection in the action-state-based semantics of \TLA{}.
%Many prior works have noted semantic intersection as a desired property of parallel composition \cite{Abadi:1991,Abadi:1995,Abramsky:1993,Zave:1993}.
We now prove one more lemma before proving Thm.~\ref{thm:composition-sem}.

\begin{lemma}
    \label{lem:alpha-intersect}
    Let $\mathcal{A}$ and $\mathcal{B}$ be LTSs, then $\aSemantics{\mathcal{A} \parexp \mathcal{B}} = \aSemantics{\mathcal{A}} \cap \aSemantics{\mathcal{B}}$.
\end{lemma}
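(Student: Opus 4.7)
The plan is to prove the two inclusions separately, each by exhibiting an explicit state-sequence witness and then doing a four-way case analysis on whether $\sigma_i \in \alpha\mathcal{A}$ and whether $\sigma_i \in \alpha\mathcal{B}$. Throughout, I will unfold the definition of $\aSemantics{\cdot}$ from Sec.~\ref{sec:prelim-lts}: a behavior $\sigma$ is in $\aSemantics{\mathcal{D}}$ iff there exists a sequence of states $q_0, q_1, \dots$ with $q_0 \in I_\mathcal{D}$ such that, for every index $i$, either $\sigma_i \in \alpha\mathcal{D}$ with $(q_i,\sigma_i,q_{i+1}) \in \delta_\mathcal{D}$, or $\sigma_i \notin \alpha\mathcal{D}$ with $q_i = q_{i+1}$. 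I will also use $\alpha(\mathcal{A} \parexp \mathcal{B}) = \alpha\mathcal{A} \cup \alpha\mathcal{B}$ from the definition of $\parexp$.

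For the forward inclusion $\aSemantics{\mathcal{A} \parexp \mathcal{B}} \subseteq \aSemantics{\mathcal{A}} \cap \aSemantics{\mathcal{B}}$, I take $\sigma \in \aSemantics{\mathcal{A} \parexp \mathcal{B}}$ with a witnessing sequence $(p_0,q_0),(p_1,q_1),\dots$ and argue that the projections $p_0,p_1,\dots$ and $q_0,q_1,\dots$ witness $\sigma \in \aSemantics{\mathcal{A}}$ and $\sigma \in \aSemantics{\mathcal{B}}$ respectively. Showing this for $\mathcal{A}$ splits into: (i) $\sigma_i \in \alpha\mathcal{A}$, where either $\sigma_i \in \alpha\mathcal{B}$ (synchronized transition of $\parexp$, giving a real $\mathcal{A}$-transition on $p_i,p_{i+1}$) or $\sigma_i \notin \alpha\mathcal{B}$ (interleaving step on the $\mathcal{A}$-side, same conclusion); and (ii) $\sigma_i \notin \alpha\mathcal{A}$, where either the step is an interleaving on the $\mathcal{B}$-side (so $p_i = p_{i+1}$) or $\sigma_i$ is outside both alphabets and hence outside $\alpha(\mathcal{A} \parexp \mathcal{B})$, forcing a composition-stutter step in which both components stutter. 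The argument for $\mathcal{B}$ is symmetric.

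For the backward inclusion, I fix state-sequence witnesses $p_0,p_1,\dots$ for $\sigma \in \aSemantics{\mathcal{A}}$ and $q_0,q_1,\dots$ for $\sigma \in \aSemantics{\mathcal{B}}$ and claim that $(p_0,q_0),(p_1,q_1),\dots$ witnesses $\sigma \in \aSemantics{\mathcal{A} \parexp \mathcal{B}}$. The initial pair lies in $I_\mathcal{A} \times I_\mathcal{B}$ by construction. For each $i$, I again case split on the four combinations: if $\sigma_i$ lies in both alphabets, both individual witnesses supply real transitions, matching the synchronization clause of $\parexp$; if $\sigma_i$ lies in exactly one alphabet, that component supplies a real transition while the other stutters by definition of $\aSemantics{\cdot}$, matching the interleaving clause; and if $\sigma_i$ lies in neither alphabet, then $\sigma_i \notin \alpha(\mathcal{A} \parexp \mathcal{B})$ and both $p_i = p_{i+1}$ and $q_i = q_{i+1}$, so the composed state stutters as required.

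I do not expect any real obstacle here; the entire argument is bookkeeping tied to the four alphabet cases. The one subtle point worth stating carefully is the final case of the backward direction, where $\sigma_i$ belongs to neither alphabet: this is the only case that is handled purely by the stuttering clause of $\aSemantics{\cdot}$ rather than by a transition of $\parexp$, and it is essential that the definition in Sec.~\ref{sec:prelim-lts} permits stuttering on arbitrary actions outside $\alpha\mathcal{D}$ rather than only on common synchronized ones, so that the two individual stutters glue together into a composition-level stutter.
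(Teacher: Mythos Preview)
Your proposal is correct and follows essentially the same approach as the paper: both directions are proved by exhibiting paired state-sequence witnesses and performing the four-way case split on membership of $\sigma_i$ in $\alpha\mathcal{A}$ and $\alpha\mathcal{B}$. Your explicit remark about the stuttering case (when $\sigma_i$ lies in neither alphabet) is a nice addition that the paper leaves implicit.
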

\begin{proof}
    We first prove that $\aSemantics{\mathcal{A} \parexp \mathcal{B}} \subseteq \aSemantics{\mathcal{A}} \cap \aSemantics{\mathcal{B}}$.
    Suppose that $\sigma \in \aSemantics{\mathcal{A} \parexp \mathcal{B}}$.
    Then there exists a sequence of state pairs $(p_0,q_0),(p_1,q_1),\dots$ such that $p_0 \in I_{\mathcal{A}}$, $q_0 \in I_{\mathcal{B}}$ and, for each nonnegative index $i$, either
    (1) $\sigma_i \in \alpha \mathcal{A}$, $(p_i,\sigma_i,p_{i+1}) \in \delta_{\mathcal{A}}$, $\sigma_i \in \alpha \mathcal{B}$, and $(q_i,\sigma_i,q_{i+1}) \in \delta_{\mathcal{B}}$,
    (2) $\sigma_i \in \alpha \mathcal{A}$, $(p_i,\sigma_i,p_{i+1}) \in \delta_{\mathcal{A}}$, $\sigma_i \notin \alpha \mathcal{B}$, and $q_i = q_{i+1}$,
    (3) $\sigma_i \notin \alpha \mathcal{A}$, $p_i = p_{i+1}$, $\sigma_i \in \alpha \mathcal{B}$, and $(q_i,\sigma_i,q_{i+1}) \in \delta_{\mathcal{B}}$, or
    (4) $\sigma_i \notin \alpha \mathcal{A}$, $p_i = p_{i+1}$, $\sigma_i \notin \alpha \mathcal{B}$, and $q_i = q_{i+1}$.
    In each of these four cases, $\sigma_i$ is a valid action from $p_i$ to $p_{i+1}$ in $\mathcal{A}$ and also a valid action from $q_i$ to $q_{i+1}$ in $\mathcal{B}$.
    Given that $p_0 \in I_{\mathcal{A}}$ and $q_0 \in I_{\mathcal{B}}$, we have $\sigma \in \aSemantics{\mathcal{A}}$ and $\sigma \in \aSemantics{\mathcal{B}}$.

    Conversely, we will now show that $\aSemantics{\mathcal{A}} \cap \aSemantics{\mathcal{B}} \subseteq \aSemantics{\mathcal{A} \parexp \mathcal{B}}$.
    Suppose that $\sigma \in \aSemantics{\mathcal{A}} \cap \aSemantics{\mathcal{B}}$.
    Then there exists a sequence of states $p_0,p_1,\dots$ such that $p_0 \in I_{\mathcal{A}}$ and, for each nonnegative index $i$, either
    (1) $\sigma_i \in \alpha \mathcal{A}$ and $(p_i,\sigma_i,p_{i+1}) \in \delta$, or
    (2) $\sigma_i \notin \alpha \mathcal{A}$ and $p_i = p_{i+1}$.
    Likewise, there exists a sequence of state $q_0,q_1,\dots$ with the same conditions but for $\mathcal{B}$.
    Then we can construct a sequence of state pairs $(p_0,q_0),(p_1,q_1),\dots$ such that $p_0 \in I_{\mathcal{A}}$, $q_0 \in I_{\mathcal{B}}$ and, for each nonnegative index $i$, at least one of the following four conditions holds:
    (1) $\sigma_i \in \alpha \mathcal{A}$, $(p_i,\sigma_i,p_{i+1}) \in \delta_{\mathcal{A}}$, $\sigma_i \in \alpha \mathcal{B}$, and $(q_i,\sigma_i,q_{i+1}) \in \delta_{\mathcal{B}}$,
    (2) $\sigma_i \in \alpha \mathcal{A}$, $(p_i,\sigma_i,p_{i+1}) \in \delta_{\mathcal{A}}$, $\sigma_i \notin \alpha \mathcal{B}$, and $q_i = q_{i+1}$,
    (3) $\sigma_i \notin \alpha \mathcal{A}$, $p_i = p_{i+1}$, $\sigma_i \in \alpha \mathcal{B}$, and $(q_i,\sigma_i,q_{i+1}) \in \delta_{\mathcal{B}}$, or
    (4) $\sigma_i \notin \alpha \mathcal{A}$, $p_i = p_{i+1}$, $\sigma_i \notin \alpha \mathcal{B}$, and $q_i = q_{i+1}$.
    Thus, by definition, $\sigma \in \aSemantics{\mathcal{A} \parexp \mathcal{B}}$.
\end{proof}
Using the prior two lemmas, we now prove Thm.~\ref{thm:composition-sem}.\\

\begin{thmn}{\ref{thm:composition-sem}}
    $\aSemantics{\tolts(S \parsym T)} = \aSemantics{\tolts(S) \parexp \tolts(T)}$.
\end{thmn}
\begin{proof}
    \begin{align*}
        \aSemantics{\tolts(S \parsym T)} &= actions\left(\abSemantics{S \parsym T}\right)\\
        &= actions\left(\abSemantics{S} \cap \abSemantics{T}\right)\\
        &= actions\left(\abSemantics{S}\right) \cap actions\left(\abSemantics{T}\right)\\
        &= \aSemantics{\tolts(S)} \cap \aSemantics{\tolts(T)}\\
        &= \aSemantics{\tolts(S) \parexp \tolts(T)}
    \end{align*}
    Where the second equality is due to Lem.~\ref{lem:sem-intersection}, the third equality holds because $S$ and $T$ do not share state variables, and the fifth equality is due to Lem.~\ref{lem:alpha-intersect}.
\end{proof}

\section{Slicing Procedure}
\label{apx:slicing-procedure}

We show the procedure for slicing in Fig.~\ref{fig:slicing-procedures}.
Slicing works by creating a new specification $T$ such that $\vbeta T = V$.
Line 2 in Fig.~\ref{fig:slicing-procedures} defines $T$'s state variables to be $S!vars$ restricted to the variables in $V$, while line 3 defines $T$'s initial state predicate to be $S!Init$ restricted to the variables in $V$.
Line 4 defines $acts$ to be the set of the actions in $S$ restricted to the variables in $V$, and line 5 defines $T$'s transition relation to be subset of $acts$ that are well-formed.

\begin{figure}
    %\caption{\textsc{Slice}}
    %\label{alg:slice}
    %\begin{center}
    \begin{algorithmic}[1]
    \Procedure{Slice}{$S,V$}
    \State $T!vars \leftarrow \expr{S!vars \cap V}$
    \State $T!Init \leftarrow \textsc{Slice-Rec}(S!Init,V)$
    \State $acts \leftarrow \{ \textsc{Slice-Rec}(A,V) \mid A \in \salpha S \}$
    \State $T!Next \leftarrow \expr{\bigvee \{ A \in acts \mid A \neq \text{DEL} \}}$
    \State \Return $T$
    \EndProcedure
    %{\tlatex \@pvspace{6.0pt}}
    \Procedure{Slice-Rec}{$e,V$}
    \State \textbf{match} $e$ \textbf{with}
    \State $\expr{op(p) \triangleq e_1} \rightarrow$
        \IndState $E \leftarrow \textsc{Slice-Rec}(e_1,V)$
        \IndState \textbf{if} $E = DEL$ \textbf{then}
            \IndDubState \Return \text{DEL}
        \IndState \textbf{else}
            \IndDubState \Return $\expr{op(p) \triangleq E}$
    \vfill\null
    \columnbreak
    \State $\expr{\exists x \in d \st e_1} \rightarrow$
        \IndState $E \leftarrow \textsc{Slice-Rec}(e_1,V)$
        \IndState \textbf{if} $E = \text{DEL}$ \textbf{then}
            \IndDubState \Return \text{DEL}
        \IndState \textbf{else}
            \IndDubState \Return $\expr{\exists x \in d \st E}$
    \State $\exprb{\bigwedge\limits_{i\in I} c_i} \rightarrow$
        \IndState $C \leftarrow \{ c_j \mid \vbeta c_j \subseteq V \}$
        \IndState \textbf{if} $C = \emptyset$ \textbf{then}
            \IndDubState \Return \text{DEL}
        \IndState \textbf{else}
            %\IndDubState $J \leftarrow \{ j \in I \st C_j \neq \text{DEL} \}$
            %\IndDubState \Return $\exprb{\bigwedge\limits_{j\in J} C_j}$
            \IndDubState \Return $\exprb{\bigwedge C_j}$
    \EndProcedure
    %{\tlatex \@pvspace{6.0pt}}
    \end{algorithmic}
    %\end{center}
    \caption{Slicing procedure definitions.}
    \label{fig:slicing-procedures}
\end{figure}

The \textsc{Slice-Rec} procedure on line 7 recursively descends into a \TLA{} formula to restrict it to a set of variables $V$.
The procedure matches three cases (line 8): definitions (line 9), existential quantifiers (line 15), and conjunctions (line 21).
We use the notation $\epsilon ( . )$ to denote \TLA{} syntax, and $e_1$ (lines 9 and 15) and $c_i$ (line 21) both refer to arbitrary \TLA{} formulas.
In a nutshell, slicing works by removing any conjunct that contains a variable not in $V$ (lines 22 and 26).
In the case that \textit{every} conjunct of a \TLA{} expression must be removed, we consider the entire expression malformed and we mark it for removal.
We mark an expression for removal using the DEL keyword on line 24, which is then bubbled to the top (lines 12 and 18); the actual removal happens in the set comprehension on line 5.

\section{Static Specification Reduction}
\label{apx:static-reduction}
We begin by showing that the sequence $\{X_i\}$ converges by the $n+1^\text{st}$ term, i.e. at $X_n$.
%This is important because it implies that $F = F_0 \cup \dots \cup F_n$, and hence calculating the data flow partial order is decidable.
\begin{theorem}
    \label{thm:xis-bound}
    $X_n = X_{n+1}$
\end{theorem}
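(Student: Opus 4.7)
The plan is to prove Thm.~\ref{thm:xis-bound} via two observations about the sequence $\{X_i\}$: it is monotonically increasing, and any strictly increasing chain of subsets of $\{C_1,\dots,C_n\}$ can have only finitely many distinct terms. The argument is essentially a pigeonhole/fixed-point argument on a bounded, monotone sequence of subsets.

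First, I would establish monotonicity: $X_i \subseteq X_{i+1}$ for every $i \geq 0$. Pick any $C_j \in X_i$. By definition $\salpha X_i$ is the union of $\salpha C_k$ over $C_k \in X_i$, so $\salpha C_j \subseteq \salpha X_i$. Assuming each $C_j$ has a nonempty alphabet (which holds for every component produced by Alg.~\ref{alg:decompose}), we get $\salpha C_j \cap \salpha X_i = \salpha C_j \neq \emptyset$, and hence $C_j \in X_{i+1}$. This is the step I expect to have to justify most carefully, since without the nonempty-alphabet assumption one could construct a pathological counterexample; I would either invoke a standing assumption on components or argue that components with empty alphabets can be treated separately without affecting the result.

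Second, I would observe that fixed points propagate: if $X_i = X_{i+1}$ for some $i$, then $\salpha X_i = \salpha X_{i+1}$, so the recursive formula gives $X_{i+2} = \{C_j : \salpha C_j \cap \salpha X_{i+1} \neq \emptyset\} = \{C_j : \salpha C_j \cap \salpha X_i \neq \emptyset\} = X_{i+1}$, and by induction $X_k = X_i$ for all $k \geq i$.

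Finally, I would apply a simple counting argument. Every $X_i$ is a subset of $\{C_1,\dots,C_n\}$, so $|X_i| \leq n$. Since $|X_0| = 1$ and each $X_{i+1}$ either equals $X_i$ or is a strict superset, the chain can grow strictly at most $n - 1$ times. Hence there must exist some $k \leq n-1$ with $X_k = X_{k+1}$, and by the fixed-point propagation above, $X_j = X_k$ for all $j \geq k$. In particular $X_n = X_{n+1}$, as required.
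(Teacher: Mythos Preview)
Your proposal is correct and follows essentially the same pigeonhole/counting argument as the paper. The paper's proof is more compressed: it states that $|X_{n+1}|\leq n$, asserts (via ``a simple inductive argument'') that $X_i\neq X_{i+1}$ implies $|X_{i+1}|\geq i+1$, and derives a contradiction from $X_n\neq X_{n+1}$. Your monotonicity and fixed-point-propagation steps are exactly what is needed to justify that inductive claim, and your flag on the nonempty-alphabet assumption is a genuine technical point the paper glosses over.
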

\begin{proof}
    First, we note that $n+1 > |X_{n+1}|$ because there are at most $n$ components that can be in each $X_i$.
    Next, a simple inductive argument shows that whenever $X_i \neq X_{i+1}$, $|X_{i+1}| \geq i+1$.
    Finally, for the sake of contradiction, suppose that $X_n \neq X_{n+1}$.
    Then it would be the case that $n+1 > |X_{n+1}| \geq n+1$ which is a contradiction.
\end{proof}

Before presenting Lem.~\ref{lem:static-reduction}, we define $states(\sigma)$ on an action-state-based behavior $\sigma$ to denote the state-based behavior $b_0 b_1 \dots$.
We overload this operator on a set of action-state-based behaviors $B$ as follows: $states(B) = \{ states(\sigma) \mid \sigma \in B \}$.\\

\begin{lemn}{\ref{lem:static-reduction}}
    $S \models P$ if and only if $\parsym \left( \bigcup_i X_i \right) \models P$.
\end{lemn}
\begin{proof}
    Let $N = \parsym (\bigcup_i X_i)$ be the composition of the necessary components and let $U = \parsym (\{C_1,\dots,C_n\} - \bigcup_i X_i)$ be the composition of the unnecessary components.
    We note that $\salpha N \cap \salpha U = \emptyset$, which follows from the definition of $X_{i+1}$.
    Also, because $N$ is composed of $C_1$, $\vbeta P \subseteq \vbeta N$ by property (P2) of Lem.~\ref{lem:decompose-correct}.
    Because $\vbeta P \subseteq \vbeta N$ and components do not share state variables, we can therefore conclude that $\vbeta P \cap \vbeta U = \emptyset$.
    Now we have:
    \begin{align*}
        S \models P &\Longleftrightarrow N \parsym U \models P\\
        &\Longleftrightarrow states(\abSemantics{N \parsym U}) \subseteq \bSemantics{P}\\
        &\Longleftrightarrow states(\abSemantics{N} \cap \abSemantics{U}) \subseteq \bSemantics{P}\\
        &\Longleftrightarrow states(\abSemantics{N}) \subseteq \bSemantics{P}\\
        &\Longleftrightarrow N \models P
    \end{align*}
    In the equation above, the third biconditional is due to Lem.~\ref{lem:sem-intersection}.
    The fourth biconditional follows for two reasons.
    The first reason is that $\vbeta P \cap \vbeta U = \emptyset$, and therefore the states in each behavior of $\abSemantics{U}$ will not restrict the values of the variables in $\vbeta P$.
    The second reason is that $\salpha N \cap \salpha U = \emptyset$; this implies $\calpha N \cap \calpha U = \emptyset$, so the actions in each behavior of $\abSemantics{U}$ will not restrict any behavior in $\abSemantics{N}$.
\end{proof}

\section{The Data Flow Partial Order Refines Static Specification Reduction}
\label{apx:data-flow-po}

The following theorem shows that the data flow partial order refines our static specification reduction scheme by providing \textit{more} information.
In other words, the data flow partial order identifies a class of components that are necessary for verification ($\bigcup_i X_i$) \textit{and} provides an ordering over these components.

\begin{theorem}
    \label{thm:data-flow-po-property}
    Let $G = \{ C_i \mid \exists C_j : C_i \preccurlyeq C_j \text{ or } C_j \preccurlyeq C_i\}$ be the set of all specifications for which the data flow partial order is defined.
    Then $G = \bigcup_i X_i$, i.e. it is exactly the set of necessary components in static specification reduction.
\end{theorem}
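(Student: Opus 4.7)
The plan is to prove the theorem by double inclusion, exploiting the layered structure already exposed by the sets $E_i$ and the edge sets $F_i$. A preliminary step I would establish is that the sequence $\{X_i\}$ is monotone nondecreasing (assuming every component has a nonempty alphabet, so $\salpha C \cap \salpha X_i \neq \emptyset$ whenever $C \in X_i$). This gives $X_i = E_0 \cup E_1 \cup \dots \cup E_i$, so $\bigcup_i X_i = \bigcup_i E_i$ and the $E_i$'s form a disjoint partition. By Thm.~\ref{thm:xis-bound}, only finitely many of these are nonempty.

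For the inclusion $G \subseteq \bigcup_i X_i$, I would argue that any $C \in G$ admits some $C'$ with $C \preccurlyeq C'$ (or symmetrically). Since $\preccurlyeq$ is the reflexive transitive closure of $F$, there is a finite $F$-chain witnessing this (or $C$ itself lies in the field of $F$). Every edge of $F$ belongs to some $F_{i+1}$, and by construction the two endpoints of such an edge lie in $E_i$ and $E_{i+1}$ respectively. Therefore every vertex touched by any $F$-edge is an element of $\bigcup_i E_i = \bigcup_i X_i$, which forces $C$ into the target set.

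For the reverse inclusion $\bigcup_i X_i \subseteq G$, I would take $C \in E_k$ for some $k$. The case $k = 0$ gives $C = C_1$; provided $E_1$ is nonempty, any element of $E_1$ pairs with $C_1$ in $F_1$ and puts $C_1$ into the field of $F$, and the degenerate case $E_1 = \emptyset$ is handled directly by reflexivity on $\{C_1\}$. For $k \ge 1$, the crux is to show that $C$ shares an action with some element of $E_{k-1}$ \emph{specifically}, not merely with some element of $X_{k-1}$. This uses $C \in X_k \setminus X_{k-1}$: on the one hand $\salpha C \cap \salpha X_{k-1} \neq \emptyset$, on the other hand $\salpha C \cap \salpha X_{k-2} = \emptyset$ (taking $X_{-1} = \emptyset$); since $\salpha X_{k-1} = \salpha X_{k-2} \cup \salpha E_{k-1}$, the overlap must land in $\salpha E_{k-1}$. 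This yields a witness $C' \in E_{k-1}$ with $(C', C) \in F_k$, placing $C$ in the field of $F$ and hence in $G$ by reflexivity.

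The main obstacle is precisely this last algebraic manipulation: phrasing and justifying that a newly added component in $E_k$ cannot share actions only with deeper layers, because otherwise it would have been added earlier, contradicting $C \notin X_{k-1}$. Once this ``newness'' lemma is pinned down, the rest of the argument is bookkeeping over the layered partition and a routine unpacking of the transitive closure.
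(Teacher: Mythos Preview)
Your proposal is correct and follows essentially the same approach as the paper: both hinge on the observation that every $C \in E_k$ with $k \geq 1$ shares an action with some element of $E_{k-1}$ (your ``newness lemma'' is exactly the paper's ``alternate formulation of $E_{i+1}$''), which forces the field of $F$ to coincide with $\bigcup_i E_i = \bigcup_i X_i$. The paper packages this by computing $proj_1(F_{i+1})$ and $proj_2(F_{i+1})$ and chaining set equalities, while you argue by double inclusion, but the substance is identical.
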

\begin{proof}
    Based on the definition for $X_{i+1}$, we provide an alternate formulation for $E_{i+1}$:
    \begin{align*}
        E_{i+1} &= \{ C_j \mid C_j \notin X_i \text{ and } \salpha C_j \cap \salpha E_i \neq \emptyset \}\\
        &= \{ C_j \mid C_j \notin X_i \} \cap \{ C_j \mid \salpha C_j \cap \salpha E_i \neq \emptyset \}
    \end{align*}

    We now provide a formula for the set of all first (lesser) elements in each pair for $F_{i+1}$.
    We will refer to this set as $proj_1(F_{i+1})$ for the projection onto the first element.
    \begin{align*}
        proj_1(F_{i+1}) &= \{C_j \mid C_j \in E_i \text{ and} \\
        &\quad\quad\quad\quad \exists C_k \in E_{i+1} : \salpha C_j \cap \salpha C_k \neq \emptyset \}\\
        &=E_i \cap \{C_j \mid \salpha C_j \cap \salpha E_{i+1} \neq \emptyset \}\\
        &=E_i
    \end{align*}
    Where the final equality is due to the alternate formulation of $E_{i+1}$.
    %Therefore, $\bigcup_{i \geq 0} E_i = \bigcup_{i \geq 1} proj_1(F_i)$.
    %We can further conclude $\bigcup_i E_i = \bigcup_i proj_1(F_i)$ because $F_0 = \emptyset$.

    We now provide a formula for the set of all second (greater) elements in each pair for $F_{i+1}$.
    We will refer to this set as $proj_2(F_{i+1})$ for the projection onto the second element.
    \begin{align*}
        proj_2(F_{i+1}) &= \{C_k \mid C_k \in E_{i+1} \text{ and} \\
        &\quad\quad\quad\quad \exists C_j \in E_i : \salpha C_j \cap \salpha C_k \neq \emptyset \}\\
        &=\{C_k \mid C_k \in E_{i+1} \text{ and } \salpha E_i \cap \salpha C_k \neq \emptyset \}\\
        &=E_{i+1} \cap \{C_k \mid \salpha E_i \cap \salpha C_k \neq \emptyset \}\\
        &=E_{i+1}
    \end{align*}
    Where the final equality is due to the alternate formulation of $E_{i+1}$.
    Putting it all together we see:
    \begin{align*}
        \bigcup_{i \geq 0} X_i &= \bigcup_{i \geq 0} E_i = \left( \bigcup_{i \geq 0} E_i \right) \cup \left( \bigcup_{i \geq 0} E_{i+1} \right)\\
        &= \left( \bigcup_{i \geq 0} proj_1(F_{i+1}) \right) \cup \left( \bigcup_{i \geq 0} proj_2(F_{i+1}) \right)\\
        &= \left( \bigcup_{i \geq 0} proj_1(F_i) \right) \cup \left( \bigcup_{i \geq 0} proj_2(F_i) \right) = G
    \end{align*}
    Where the penultimate equality follows because $F_0 = \emptyset$.
\end{proof}

\end{document}